\newcommand{\app}{\textsc{Hamlet}}
\newcommand{\greta}{\textsc{Greta}}
\newcommand{\sharon}{\textsc{Sharon}}
\newcommand{\mcep}{\textsc{MCEP}}
\newcommand{\within}{\textsf{\footnotesize WITHIN}}
\newcommand{\slide}{\textsf{\footnotesize SLIDE}}
\newcommand{\seq}{\textsf{\footnotesize SEQ}}
\newcommand{\mycount}{\textsf{\footnotesize COUNT}}
\newcommand{\mymin}{\textsf{\footnotesize MIN}}
\newcommand{\mymax}{\textsf{\footnotesize MAX}}
\newcommand{\mysum}{\textsf{\footnotesize SUM}}
\newcommand{\myavg}{\textsf{\footnotesize AVG}}
\newcommand{\mynot}{{\small \textsf{NOT}}}
\providecommand*{\cupdot}{%
  \mathbin{%
    \mathpalette\@cupdot{}%
  }%
}
\newcommand*{\@cupdot}[2]{%
  \ooalign{%
    $\m@th#1\cup$\cr
    \hidewidth$\m@th#1\cdot$\hidewidth
  }%
}
\renewcommand{\algorithmiccomment}[1]{\bgroup\hfill//~#1\egroup}
\algnewcommand\algorithmicswitch{\textbf{switch}}
\algnewcommand\algorithmiccase{\textbf{case}}
\algnewcommand\algorithmicassert{\texttt{assert}}
\algnewcommand\Assert[1]{\State \algorithmicassert(#1)}%
\renewcommand{\algorithmiccomment}[1]{/* #1 */}
\newcommand{\nop}[1]{}
\newcommand{\rem}[1]{\marginpar{\flushleft{#1}}}
\renewcommand{\rem}[1]{} 
\renewcommand{\algorithmiccomment}[1]{/* #1 */}
\newtheorem{definition}{Definition}
\newtheorem{example}{Example}
\newtheorem{theorem}{Theorem}[section]
\newcommand{\eat}[1] {}
 \newlength{\hoehe}
 \newlength{\breite}
\title{\fontsize{15}{15}\selectfont To Share, or not to Share Online Event Trend Aggregation\\ Over Bursty Event Streams\\\vspace*{1cm}
\large Technical Report\\
\vspace*{1cm}}
\author{\large Olga Poppe,$^1$ Chuan Lei,$^2$ Lei Ma,$^3$ Allison Rozet,$^4$ and Elke A. Rundensteiner$^3$}
\date{\Large 
January, 2021\\
\vspace*{1cm}
\large $^1$Microsoft Gray Systems Lab, One Microsoft Way, Redmond, WA 98052\\
$^2$IBM Research, Almaden, 650 Harry Rd, San Jose, CA 95120\\
$^3$Worcester Polytechnic Institute, Worcester, MA 01609\\
$^4$MathWorks, 1 Apple Hill Dr, Natick, MA 01760\\
\vspace*{0.2cm}
olpoppe@microsoft.com, chuan.lei@ibm.com, lma5@wpi.edu, arozet@mathworks.com, rundenst@wpi.edu\\
\vspace*{7cm}
}
\begin{document}
\maketitle

\begin{spacing}{0.8}
{\footnotesize \noindent \textbf{Copyright} \copyright{} 2021 by
authors. Permission to make digital or hard copies of all or
part of this work for personal use is granted without fee provided
that copies bear this notice and the full citation on the first
page. To copy otherwise, to republish, to post on servers or to
redistribute to lists, requires prior specific permission. }
\end{spacing}

\clearpage
\pagestyle{fancy}

\clearpage
\tableofcontents

\pagenumbering{arabic}
\setcounter{page}{1}

%
%

\newpage
\begin{abstract}

Complex event processing (CEP) systems continuously evaluate large workloads of pattern queries under tight time constraints. Event trend aggregation queries with Kleene patterns are commonly used to retrieve summarized insights about the recent trends in event streams. State-of-art methods are limited either due to repetitive computations or unnecessary trend construction. Existing shared approaches are guided by statically selected and hence rigid sharing plans that are often sub-optimal under stream fluctuations. In this work, we propose a novel framework \app\ that is the first to overcome these limitations. \app\ introduces two key innovations. First, \app\ adaptively decides whether to share or not to share computations depending on the current stream properties at run time to harvest the maximum sharing benefit. Second, \app\ is equipped with a highly efficient shared trend aggregation strategy that avoids trend construction. Our experimental study on both real and synthetic data sets demonstrates that \app\ consistently reduces query latency by up to five orders of magnitude compared to the state-of-the-art approaches.

\end{abstract}

\section{Introduction}
\label{sec:introduction}

Sensor networks, web applications, and smart devices produce high velocity event streams. Industries use Complex Event Processing (CEP) technologies to extract insights from these streams using Kleene queries~\cite{ADGI08,Giatrakos2020,ZDI14},
i.e., queries with Kleene plus ``+'' operator that matches event sequences of any length, a.k.a. \textit{event trends}~\cite{PLAR17}. Since these trends can be arbitrarily long and complex and there also tends to be a large number of them, they are typically aggregated to derive summarized insights~\cite{QCRR14}. CEP systems must thus process large workloads of these event trend aggregation queries over high-velocity streams in near real-time.

\begin{example}
Complex event trend aggregation queries are  used in Uber and DoorDash for price computation, forecasting, scheduling, and routing~\cite{uber-athenax}. With hundreds of users per district, thousands of transactions, and millions of districts nationwide, real-time event analytics has become a challenging task.

In Figure~\ref{fig:queries}, the query workload  computes various trip statistics such as the number, total duration, and average speed of trips per district. Each event in the stream is of a particular event type, e.g., \textit{Request}, \textit{Pickup}, \textit{Dropoff}. Each event is associated with attributes such as a time stamp, district, speed, driver, and rider identifiers.

Query $q_1$ focuses on trips in which the driver drove to a pickup location but did not pickup a rider within 30 minutes since the request. Each trip matched by $q_1$ corresponds to a sequence of one ride \textit{Request} event, followed by one or more \textit{Travel} events (expressed by the Kleene plus operator ``+''), and not followed by a \textit{Pickup} event. All events in a trip must have the same driver and rider identifiers as required by the predicate [driver, rider].
Query $q_2$ targets \textit{Pool} riders who were dropped off at their destination. 
Query $q_3$ tracks riders who cancel their accepted requests while the drivers were stuck in slow-moving traffic.
All three queries contain the expensive Kleene sub-pattern $T+$ that matches arbitrarily long event trends. Thus, one may conclude that sharing $T+$ always leads to computational savings. However, a closer look reveals that the actual sharing benefit depends on the current stream characteristics. Indeed, trips are affected by many factors, from time and location to specific incidents, as the event stream fluctuates. 
%
%
%
\label{ex:motivating}
\end{example}

\vspace*{-2mm}
\begin{figure}[!htb]
\centering
\includegraphics[width=0.5\columnwidth]{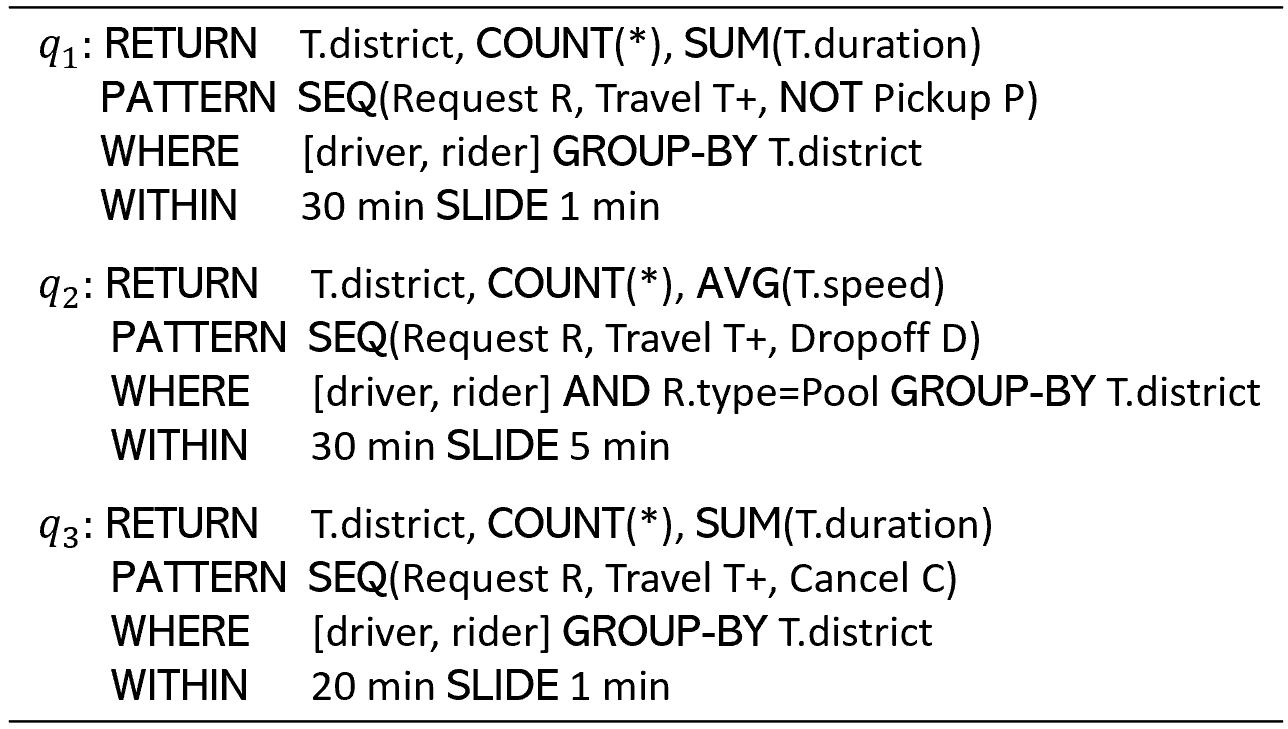}
\caption{Event trend aggregation queries}
\label{fig:queries}
\end{figure}
\vspace*{-2mm}


\textbf{Challenges}. To enable shared execution of trend aggregation queries, we must tackle the following open challenges. 


\textit{Exponential complexity versus real-time response}.
Construction of event trends matched by a Kleene query has exponential time complexity in the number of matched events~\cite{PLAR17,ZDI14}. To achieve real-time responsiveness, shared execution of trend aggregation queries should thus adopt online strategies that compute trend aggregates on-the-fly while avoiding this expensive trend construction~\cite{PLRM18,PLRM19}. However, shared execution applied to such online trend aggregation incurs additional challenges not encountered by the shared construction of traditional queries~\cite{KS19}. In particular, we must avoid constructing these trends, while  capturing  critical connections among shared sub-trends compactly to validate predicates of each query. 
For example, query $q_1$ in Figure~\ref{fig:queries} may match all events of type \textit{Travel}, while queries $q_2$ and $q_3$ may only match some of them due to their predicates. Consequently, different trends will be matched by these queries.
%
On first sight it appears that result validation requires the construction of all trends per query, which would defeat the goal of online aggregation. To address this dilemma, we must  develop  a correct yet efficient shared online trend aggregation strategy.


\textit{Benefit versus overhead of sharing}.
One may assume that the more sub-patterns are shared, the greater the performance improvement will be. However, this assumption does not always hold due to the overhead caused by maintaining intermediate aggregates of sub-patterns to ensure correctness of results. The computational overhead incurred by  shared query execution does not always justify the savings achievable compared to  baseline non-shared execution.
For example, sharing query $q_1$ with the other two queries in Figure~\ref{fig:queries} will not be beneficial if there are only few \textit{Pool} requests and the travel speed is above 10 mph.
%
Hence, we need to devise a lightweight benefit model that accurately estimates the benefit of shared execution of multiple trend aggregation  at runtime.

\textit{Bursty event streams versus light-weight sharing decisions}.
The actual sharing benefit can vary over time due to the nature of bursty event streams. Even with an efficient shared execution strategy and an accurate sharing benefit model, a static sharing solution may not always lead to computational savings. Worse yet, 
in some cases,  a static sharing decision may do more harm than good. 
Due to different predicates and windows of queries in Figure~\ref{fig:queries}, one may decide at compile time that these queries should not be shared. However, a large burst of \textit{Pool} requests may arrive and the traffic may be moving slowly (i.e., speed below 10 mph) in rush hour, making sharing of these queries beneficial.
For this, a dynamic sharing optimizer, capable of adapting to changing arrival rates, data distribution, and other cost factors, must be designed. Its runtime sharing decisions must be light-weight to ensure real-time responsiveness.


\begin{table}[!htb]
    \centering
    \begin{tabular}{|l||c|c|c|}
    \hline
        \multirow{2}{*}{\textbf{Approach}}
        & \textbf{Kleene}
        & \textbf{Online}
        & \textbf{Sharing}\\
        & \textbf{closure}
        & \textbf{aggregation}
        & \textbf{decisions}
       \\\hline\hline
       \mcep~\cite{KS19}
       & \checkmark
       & $-$
       & static
       \\\hline
       \sharon~\cite{PRLRM18}
       & $-$
       & \checkmark 
       & static
       \\\hline
       \greta~\cite{PLRM18}
       & \checkmark
       & \checkmark 
       & not shared
       \\\hline
       \app\ (ours)
       & \checkmark
       & \checkmark 
       & dynamic
       \\\hline
    \end{tabular}
    \caption{Approaches to event trend aggregation}
    \label{tab:approaches}
\end{table}

\vspace*{-5mm}

\textbf{State-of-the-Art Approaches}. 
While there are approaches to shared execution of multiple Kleene queries~\cite{hong2009rule,KS19}, they first construct all trends and then aggregate them. Even if trend construction is shared, its exponential complexity is not avoided~\cite{PLAR17,ZDI14}. Thus, even the most recent approach, \mcep~\cite{KS19} is 76--fold slower than \app\ as the number of events scales to 10K events per window (Figure~\ref{fig:e1-latency-events}).
%
Recent work on event trend processing~\cite{PLRM18,PLRM19,PRLRM18} addresses this performance bottleneck by pushing the aggregation computation into the pattern matching process. Such online methods manage to skip the trend construction step and thus reduce time complexity of trend aggregation from exponential to quadratic in the number of matched events. Among these online approaches, \greta~\cite{PLRM18} is the only approach that supports Kleene closure. Unfortunately, \greta\ neglects sharing opportunities in the workload and instead processes each query independently from others.
On the other hand, while \sharon~\cite{PRLRM18} considers  sharing among  queries, it does not support Kleene closure. Thus, it is restricted to fixed-length event sequences. Further, its shared execution strategy is static and thus misses runtime sharing opportunities. Our experiments confirm that these existing approaches fail to cope with high velocity streams with 100K events per window (Figures~\ref{fig:e2-latency-nyc-events} and \ref{fig:e2-latency-sh-events}). Table~\ref{tab:approaches} summarizes the approaches mentioned above with respect to the challenges of shared execution of multiple trend aggregation queries.


\textbf{Proposed Solution}. 
To address these challenges, 
we now propose the \app\  approach that supports  online aggregation over Kleene closure while dynamically deciding which subset of sub-patterns should be shared by which trend aggregation queries and for how long depending on the current characteristics of the event stream.
The \app\ optimizer leverages these stream characteristics to estimate the runtime sharing benefit.  Based on the estimated benefit, it instructs the \app\ executor to switch between shared and non-shared execution strategies. Such fine-grained decisions allow \app\ to maximize the sharing benefit at runtime. 
%
The \app\ runtime executor propagates shared trend aggregates from previously matched events to newly matched events 
in an online fashion, i.e., without constructing event trends.

\textbf{Contributions}.
\app\ 
offers the following key innovations.

1. We present a novel framework \app\ for optimizing a workload of queries computing aggregation over Kleene pattern matches, called event trends. To the best of our knowledge, \app\ is the first to seamlessly integrate the power of online event trend aggregation and adaptive execution sharing among queries.


2. We introduce the \app\ graph to compactly capture  trends matched by queries in the workload. We partition the graph into smaller graphlets by event types and time. \app\ then selectively shares trend aggregation in some graphlets among multiple queries. 

3. We design a lightweight sharing benefit model to quantify the trade-off between the benefit of sharing and the overhead of maintaining the intermediate trend aggregates per query at runtime. 

4. Based on the benefit of sharing sub-patterns, we propose an adaptive sharing optimizer.
It selects a subset of queries among which it is beneficial to share this sub-pattern and determines the time interval during which this sharing remains beneficial.

5. Our experiments on several real world  stream  data sets  demonstrate that \app\ achieves up to five orders of magnitude performance improvement over state-of-the-art approaches.

\textbf{Outline.} 
Section~\ref{sec:basic} describes preliminaries.
Sections~\ref{sec:executor} and \ref{sec:runtime} describe the core \app\ techniques: online trend aggregation and dynamic sharing optimizer. 
We present experiments,
review related work  and 
conclude the paper in Sections ~\ref{sec:experiments}, 
~\ref{sec:related}, and 
~\ref{sec:conclusions}, respectively.

\section{Preliminaries}
\label{sec:basic}

\subsection{Basic Notions}
\label{sec:basic_notions}

Time is represented by a linearly ordered set of time points $(\mathbb{T},\leq)$, where $\mathbb{T} \subseteq \mathbb{Q^+}$ are the non-negative rational numbers. 
An \textbf{\textit{event}} $e$ is a data tuple describing an incident of interest to the application. An event $e$ has a time stamp $e.time \in \mathbb{T}$ assigned by the event source. 
An event $e$ belongs to a particular event type $E$, denoted \textit{e.type=E} and described by a schema that specifies the set of event attributes and the domains of their values. A specific attribute $\mathit{attr}$ of $E$ is referred to as $E.\mathit{attr}$.
Table~\ref{tab:notation} summarizes the notation.

Events are sent by event producers (e.g., vehicles and mobile devices) to an  \textbf{\textit{event stream}} $I$. We assume that events arrive in order by their time stamps. Existing approaches to handle out-of-order events can be applied~\cite{CGM10, LTSPJM08, LLGRC09, SW04}.

An event consumer (e.g., Uber stream analytics) continuously monitors the stream with \textbf{\textit{event queries}}. We adopt the commonly used query language and semantics from SASE~\cite{ADGI08, WDR06, ZDI14}.
The query workload in Figure~\ref{fig:queries} is expressed in this language. We assume that the workload is static. Adding or removing a query from a workload requires migration of the execution plan to a new workload which can be handled by existing approaches~\cite{KWF06, ZhuRH04}.

\begin{table}[!tb]
    \centering
    \begin{tabular}{|p{1.3cm}|p{11.3cm}|}
    \hline
        Notation 
        & Description \\\hline\hline
       $e.\mathit{time}$
       & Time stamp of event $e$ \\\hline
       $e.\mathit{type}$
       & Type of event $e$ \\\hline
       $E.\mathit{attr}$
       & Attribute $\mathit{attr}$ of event type $E$ \\\hline
       $\mathit{start}(q)$
       & Start types of the pattern of query $q$ \\\hline
       $\mathit{end}(q)$
       & End types of the pattern of query $q$ \\\hline
       $\mathit{pt}(E,q)$
       & Predecessor types of event type $E$ w.r.t query $q$ \\\hline
       $\mathit{pe}(e,q)$
       & Predecessor events of event $e$ w.r.t query $q$  \\\hline
       $n$ & Number of events per window \\\hline
       $g$ & Number of events per graphlet \\\hline
       $b$ & Number of events per burst \\\hline
       $k$ & Number of queries in the workload $Q$ \\\hline
       $k_s$ & Number of queries that share the graphlet $G_E$ with other queries \\\hline
       $k_n$ & Number of queries that do not share the graphlet $G_E$ with other queries \\\hline
       $p$
       & Number of predecessor types per type per query \\\hline
       $s$ & Number of snapshots \\\hline
       $s_c$
       & Number of snapshots created from one burst of events \\\hline
       $s_p$
       & Number of snapshots propagated in one shared graphlet \\\hline
    \end{tabular}
    \caption{Table of notations}
    \label{tab:notation}
\end{table}


\begin{definition}(\textbf{Kleene Pattern})
A pattern $P$ can be in the form of 
$E$, $P_1+$, ($\mynot\ P_1$),
\seq$(P_1,$ $P_2)$, $(P_1 \vee P_2)$, or $(P_1 \wedge P_2)$, 
where 
$E$ is an event type,
$P_1,P_2$ are patterns, 
$+$ is a Kleene plus, 
$\mynot$ is a negation,
\seq\ is an event sequence,
$\vee$ is a disjunction, and
$\wedge$ is a conjunction.
$P_1$ and $P_2$ are called sub-patterns of $P$.
If a pattern $P$ contains a Kleene plus operator, $P$ is called a Kleene pattern.
%
\label{def:pattern}
\end{definition}

%
%
%



\begin{definition}(\textbf{Event Trend Aggregation Query})
An event trend aggregation query $q$ consists of five clauses:

$\bullet$ Aggregation result specification (\textsf{RETURN} clause),

$\bullet$ Kleene pattern $P$ (\textsf{PATTERN} clause) as per Definition~\ref{def:pattern},

$\bullet$ Predicates $\theta$ (optional \textsf{WHERE} clause),

$\bullet$ Grouping $G$ (optional \textsf{GROUPBY} clause), and

$\bullet$ Window $w$ (\textsf{WITHIN/SLIDE} clause).
\label{def:query}
\end{definition}

\begin{definition}(\textbf{Event Trend})
Let $q$ be a query per Definition~\ref{def:query}.
An event trend $tr = (e_1, \ldots, e_k)$ corresponds to a sequence of events that conform to the pattern $P$ of $q$. All events in a trend $tr$ satisfy predicates $\theta$, have the same values of grouping attributes $G$, and are within one window $w$ of $q$. 
\label{def:trend}
\end{definition}

\textbf{Aggregation of Event Trends}.
Within each window specified by the query $q$, event trends are grouped by the values of grouping attributes $G$. Aggregates are then computed per group. 
\app\ focuses on distributive (\mycount, \mymin, \mymax, \mysum) and algebraic aggregation functions (\myavg) since they can be computed incrementally~\cite{Gray97}. 
Let $E$ be an event type,
$\mathit{attr}$ be an attribute of $E$, and
$e$ be an event of type $E$.
While $\mycount(*)$ returns the number of all trends per group,
$\mycount(E)$ computes the number of all events $e$ in all trends per group. 
$\mysum(E.\mathit{attr})$ ($\myavg(E.\mathit{attr})$) calculates the summation (average) of the value of $\mathit{attr}$ of all events $e$ in all trends per group.
$\mymin(E.\mathit{attr})$ ($\mymax(E.\mathit{attr})$) computes the minimal (maximal) value of $\mathit{attr}$ for all events $e$ in all trends per group.

\subsection{\app\ Approach in a Nutshell}
\label{sec:overview}

Given a workload of event trend aggregation queries $Q$ and a high-rate event stream $I$, the \textbf{\textit{Multi-query Event Trend Aggregation Problem}} is to evaluate the workload $Q$ over the stream $I$ such that the average query latency of all queries in $Q$ is minimal. The latency of a query $q \in Q$ is measured as the difference between the time point of the aggregation result output by the query $q$ and the arrival time of the last event that contributed to this result.

\begin{figure}[!ht]
\centering
\includegraphics[width=.6\columnwidth]{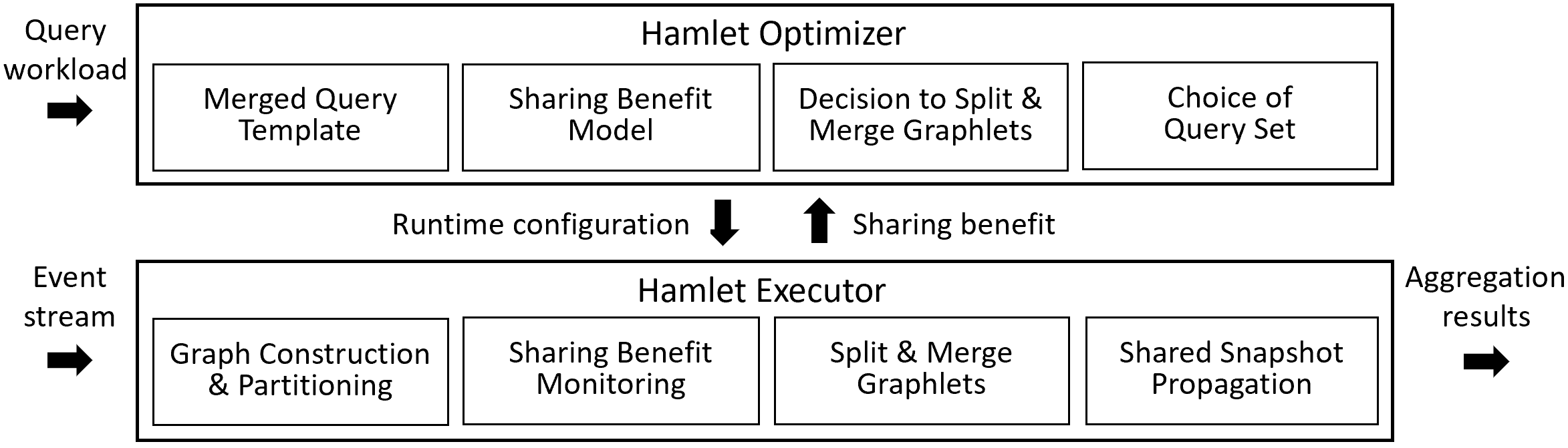}
\vspace{-5pt}
\caption{\app\ Framework}
\label{fig:framework}
\end{figure}

We design the \app\ framework in Figure~\ref{fig:framework} to tackle this problem.
To reveal all sharing opportunities in the workload at compile time, the \textbf{\textit{\app\ Optimizer}} identifies sharable queries and translates them into a Finite State Automaton-based representation, called the merged query template (Section~\ref{sec:tempalte}).
Based on this template, the optimizer reveals which sub-patterns could potentially be shared by which queries.
At runtime, the optimizer estimates the sharing benefit depending on the current stream characteristics to make fine-grained sharing decisions. Each sharing decision determines which queries share the processing of which Kleene sub-patterns and for how long (Section~\ref{sec:runtime}). 
These decisions along with the template are encoded into the runtime configuration to guide the executor.


\textbf{\textit{\app\ Executor}}  partitions the stream by the values of grouping attributes. To enable shared execution despite different windows of sharable queries, the executor further partitions the stream into panes that are sharable across overlapping windows~\cite{AW04, GSCL12, KWF06, LMTPT05}. Based on the merged query template for each set of sharable queries, the executor compactly encodes matched trends  
within a pane into the \app\ graph.
More precisely, matched events are modeled as  nodes, while event adjacency relations in a trend are edges of the graph.
%
%
Based on this graph, we incrementally compute trend aggregates by propagating intermediate aggregates along the edges from previously matched events to new events
-- without constructing the actual trends. 
This reduces the time complexity of trend aggregation from exponential to quadratic in the number of matched events compared to two-step approaches~\cite{hong2009rule,KS19,PLAR17,ZDI14}.

The \app\ graph is partitioned into sub-graphs, called graphlets, by event type and time stamps to maximally expose runtime opportunities to share these graphlets among  queries. 
Since the value of aggregates may be differ  for distinct
 queries, we capture these aggregate values per query as "snapshots" and share the propagation of snapshots through  shared graphlets (Section~\ref{sec:shared-approach}). 

Lastly, the executor implements the sharing decisions imposed by the optimizer. This may involve dynamically splitting a shared graphlet into several non-shared graphlets or, vice-versa, merging several non-shared graphlets into one shared graphlet (Section~\ref{sec:decisions}).

\section{Core \app\ Execution Techniques}
\label{sec:executor}

\textbf{Assumptions}.
To keep the discussion focused on the core concepts, we make simplifying assumptions in Sections~\ref{sec:executor} and \ref{sec:runtime}. We drop them to extend \app\ to the broad class of trend aggregation queries (Definition~\ref{def:query}) in Section~\ref{sec:other}.
These assumptions include:
(1)~queries compute the number of trends per window $\mycount(*)$;
(2)~query patterns do not contain disjunction, conjunction nor negation; and 
(3)~Kleene plus operator is applied to an event type  and appears once per query.


In Section~\ref{sec:tempalte}, we describe the workload and stream partitioning. We introduce strategies for processing  queries without sharing in Section~\ref{sec:non-shared-baseline} versus with \textit{shared} online trend aggregation in Section~\ref{sec:shared-approach}. In Section~\ref{sec:runtime}, we present the runtime optimizer that makes these sharing decisions.

\subsection{Workload Analysis and Stream Partitioning}
\label{sec:tempalte}

Given that the workload may contain queries with different Kleene patterns, aggregation functions, windows, and groupby clauses, \app\ takes the following pre-processing steps:
(1)~it breaks the workload  into  sets of sharable queries at compile time;
(2)~it then constructs the \app\ query template for each sharable query set; and
(3)~it partitions the stream by window and groupby clauses for each query template at runtime. 

\begin{definition}(\textbf{Shareable Kleene Sub-pattern})
Let $Q$ be a workload and $E$ be an event type. 
Assume that a Kleene sub-pattern $E+$ appears in queries $Q_E \subseteq Q$ and $|Q_E| > 1$. 
We say that $E+$ is shareable by queries $Q_E$.
\label{def:shareable-sub-pattern}
\end{definition}

However, sharable Kleene sub-patterns cannot always be shared due to other query clauses. For example, queries having $\mycount(*)$, $\mymin(E.$ $\mathit{attr})$ or $\mymax(E.\mathit{attr})$ can only be shared with queries that compute these same aggregates. 
In contrast,   
since $\myavg(E.\mathit{attr})$ is computed as $\mysum(E.\mathit{attr})$ divided by $\mycount(E)$, queries computing $\myavg(E.\mathit{attr})$ can be shared with queries that calculate $\mysum(E.\mathit{attr})$ or $\mycount(E)$.
We therefore define sharable queries below.

\begin{definition}(\textbf{Sharable Queries})
Two queries are \textit{sharable} if their patterns contain at least one sharable Kleene sub-pattern, their aggregation functions can be shared, their windows overlap, and their grouping attributes are the same.
\label{def:sharable_queries}
\end{definition}

To facilitate the shared runtime execution of each set of sharable queries, each pattern is converted into its Finite State Automaton-based representation~\cite{ADGI08, DGPRSW07, WDR06, ZDI14}, called \textbf{\textit{query template}}.
We adopt the state-of-the-art algorithm~\cite{PLRM18} to convert each pattern in in the workload $Q$ into its template.

Figure~\ref{fig:template} depicts the template of query $q_1$ with pattern $\seq(A,$ $B+)$.
States, shown as rectangles, represent event types in the pattern.
If a transition connects a type $E_1$ with a type $E_2$ in a template of a query $q$, then events of type $E_1$ precede events of type $E_2$ in a trend matched by $q$. $E_1$ is called a \textit{\textbf{predecessor type}} of $E_2$, denoted $E_1 \in \textit{pt}(E_2,q)$.
A state without ingoing edges is a \textit{\textbf{start type}}, and a state shown as a double rectangle is an \textit{\textbf{end type}} in a pattern. 

\begin{example}
In Figure~\ref{fig:template}, events of type $B$ can be preceded by events of types $A$ and $B$ in a trend matched by $q_1$, i.e., $\mathit{pt}(B,q_1)=\{A,B\}$. Events of type $A$ are not preceded by any events, $\mathit{pt}(A,q_1)$ $=\emptyset$. Events of type $A$ start trends and events of type $B$ end trends matched by $q_1$, i.e., $\mathit{start}(q_1)=\{A\}$ and $\mathit{end}(q_1)=\{B\}$.
\label{ex:template_one_query}
\end{example}

Our \app\ system processes the entire workload $Q$ instead of each query in isolation. To expose all sharing opportunities in $Q$, we convert the entire workload $Q$ into one \textit{\textbf{\app\ query template}}. It is constructed analogously to a query template with two additional rules. First, each event type is represented in the merged template only once. Second, each transition is labeled by the set of queries for which this transition holds.

\begin{example}
Figure~\ref{fig:merged_template} depicts the template for the workload $Q=\{q_1,q_2\}$ where query $q_1$ has pattern $\seq(A,B+)$ and query $q_2$ has pattern $\seq(C,B+)$.
The transition from $B$ to itself is labeled by two queries $q_1$ and $q_2$. This transition corresponds to the shareable Kleene sub-pattern $B+$ in these queries (highlighted in gray).
\label{ex:running_example}
\end{example}

%
The event stream is first partitioned by the grouping attributes.
To enable shared execution despite different windows of sharable queries, \app\ further partitions the stream into \textbf{\textit{panes}} that are sharable across  overlapping windows~\cite{AW04, GSCL12, KWF06, LMTPT05}.
The size of a pane is the greatest common divisor (gcd) of all window sizes and window slides. For example, for two windows $(\within\ 10\ min\ \slide$ $5\ min)$ and $(\within\ 15\ min$ $\slide\ 5$ $min)$, the gcd is 5 minutes. In this example, a pane contains all events per 5 minutes interval. For each set of sharable queries, we  apply the \app\ optimizer and executor within each pane.

\begin{figure*}[t]
\centering
\begin{minipage}{.15\textwidth}
\subfigure[Query $q_1$]{
  \includegraphics[width=1.\columnwidth]{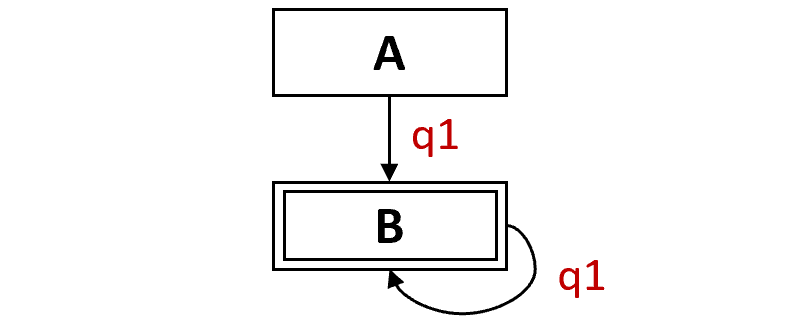}
\label{fig:template}
}
\subfigure[Workload $Q=\{q_1,q_2\}$]{
  \includegraphics[width=1.0\columnwidth]{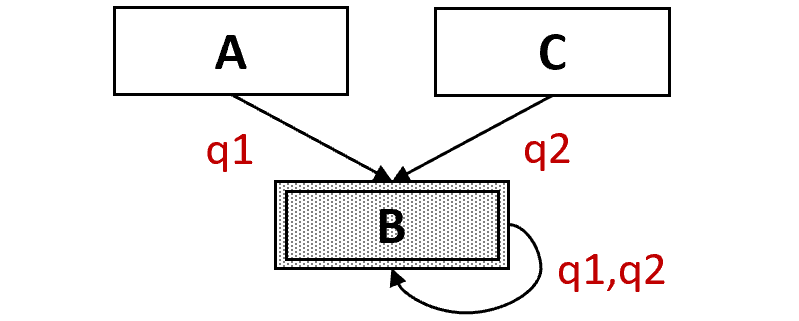}
\label{fig:merged_template}
}
\vspace{-5pt}
\caption{Template}
\label{fig:templates}
\end{minipage}
\hspace{5mm}
\begin{minipage}{.8\textwidth}
\subfigure[Non-shared \greta\ graph]{
\includegraphics[width=0.5\columnwidth]{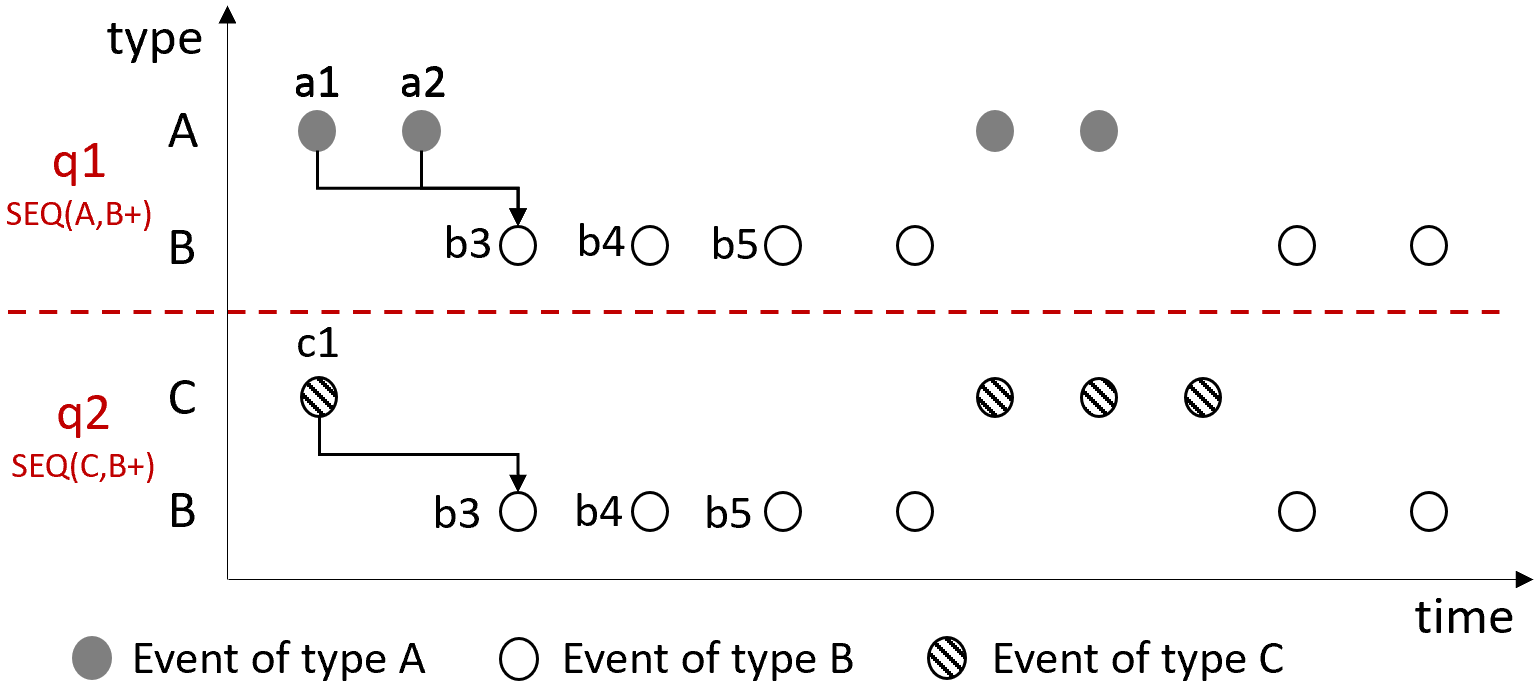}
\label{fig:not-shared}
}
\hspace{3mm}
\subfigure[Shared \app\ graph]{
\includegraphics[width=0.4\columnwidth]{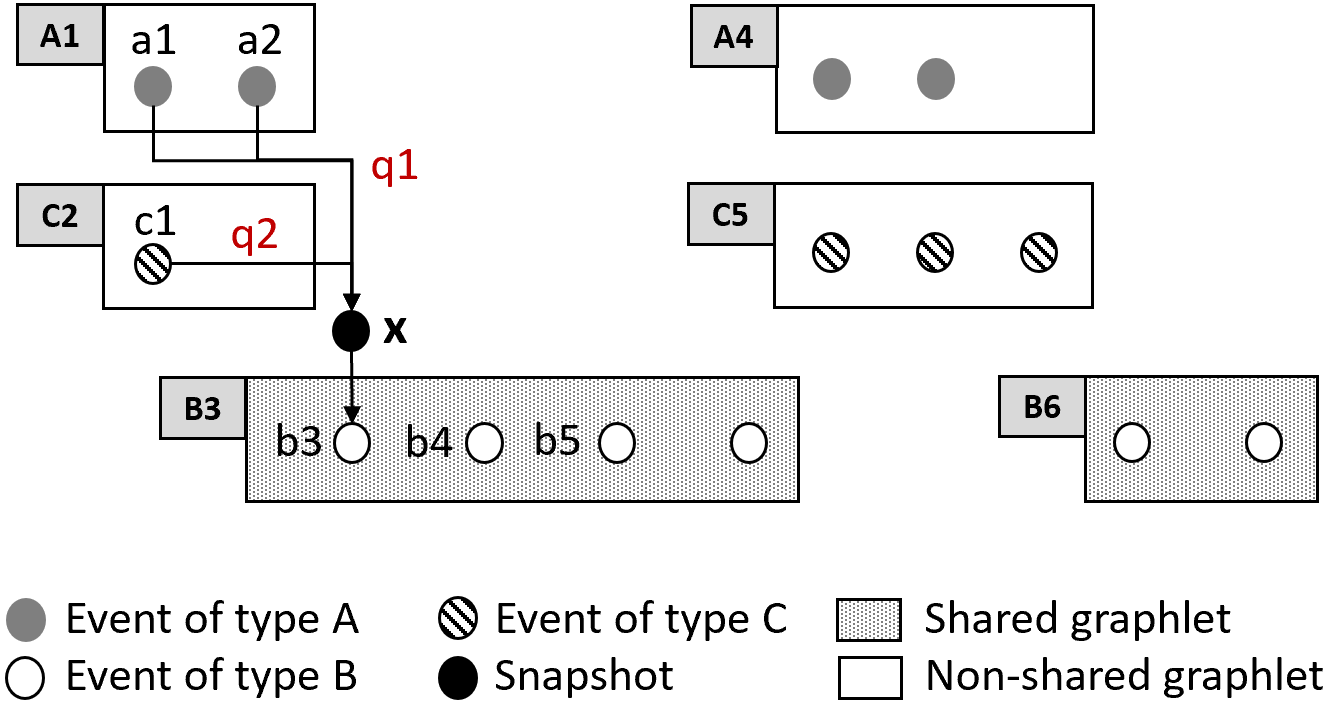}
\label{fig:snapshot}
}
\vspace{-5pt}
\caption{Non-shared vs shared execution}
\label{fig:greta_vs_hamlet}
\end{minipage}
\end{figure*}

\subsection{Non-Shared Online Trend Aggregation}
\label{sec:non-shared-baseline}

For the  non-shared execution, we describe below how the \app\ executor leverages state-of-the-art online trend aggregation approach
~\cite{PLRM18} to compute trend aggregates for each query {\it independently from all other queries}. Given a query $q$, it encodes all trends matched by $q$ in a query graph.
The nodes in the graph are events matched by $q$. Two events $e'$ and $e$ are connected by an edge if $e'$ and $e$ are adjacent in a trend matched by $q$. The event $e'$ is called a \textbf{\textit{predecessor event}} of $e$. At runtime, trend aggregates are propagated along the edges. In this way, 
we aggregate trends online, i.e., without actually constructing them.

Assume a query $q$ computes the number of trends $\mycount(*)$. 
When an event $e$ is matched by $q$, $e$ is inserted in the graph for $q$ and the \textbf{\textit{intermediate trend count}} of $e$ (denoted $\mathit{count}(e,q)$) is computed. $\mathit{count}(e,q)$ corresponds to the number of trends that are matched by $q$ and end at $e$. 
If $e$ is of start type of $q$, $e$ starts a new trend. Thus, $\mathit{count}(e,q)$ is incremented by one (Equation~\ref{eq:start_event}).
In addition, $e$ extends all trends that were previously matched by $q$. Thus, $\mathit{count}(e,q)$ is incremented by the sum of the intermediate trend counts of the predecessor events of $e$ that were matched by $q$ (denoted $\textit{pe}(e,q)$) (Equation~\ref{eq:event_count}). 
The \textbf{\textit{final trend count}} of $q$ is the sum of intermediate trend counts of all matched events of end type of $q$ (Equation~\ref{eq:final_count}).
%
\begin{align}
\mathit{start}(e,q) &=
    \begin{cases}
      1, & \text{if}\ \mathit{e.type} \in \mathit{start}(q) \\
      0, & \text{otherwise}
    \end{cases}
\label{eq:start_event}\\
\mathit{count}(e,q) &= 
\mathit{start}(e,q) + 
\sum_{e' \in \textit{pe}(e,q)}  \mathit{count}(e',q) 
\label{eq:event_count}\\
\mathit{fcount}(q) &= 
\sum_{\mathit{e.type} \in \textit{end}(q)}  \mathit{count}(e,q)
\label{eq:final_count}
\end{align}
\vspace{-5pt}

\begin{figure*}[t]
\centering
\subfigure[Snapshot $x$ at graphlet level]{
  \includegraphics[width=0.3\columnwidth]{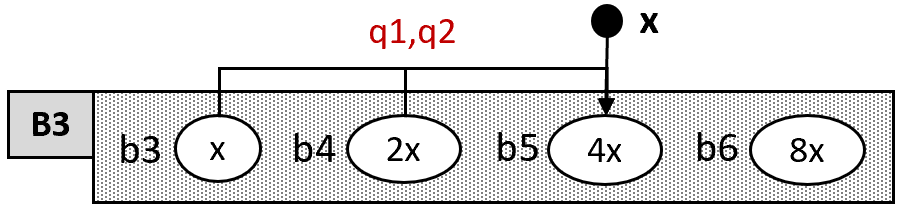}
\label{fig:no-predicates}
}
\hspace{2mm}
\subfigure[Snapshots $x$ and $y$ at graphlet level]{
  \includegraphics[width=0.3\columnwidth]{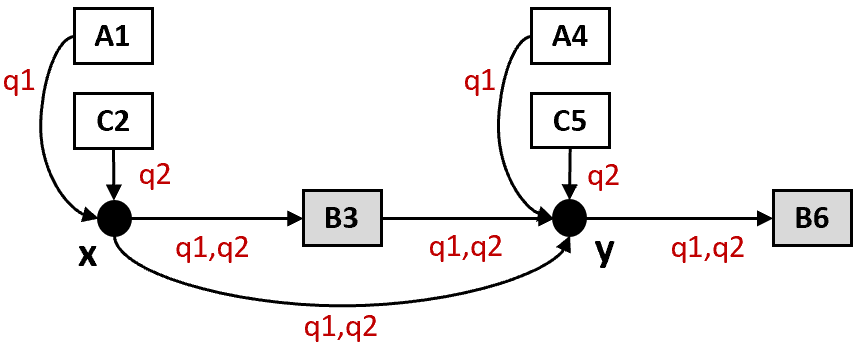}
\label{fig:snapshots}
}
\hspace{2mm}
\subfigure[Snapshot \textbf{\textit{z}} at event level]{
  \includegraphics[width=0.3\columnwidth]{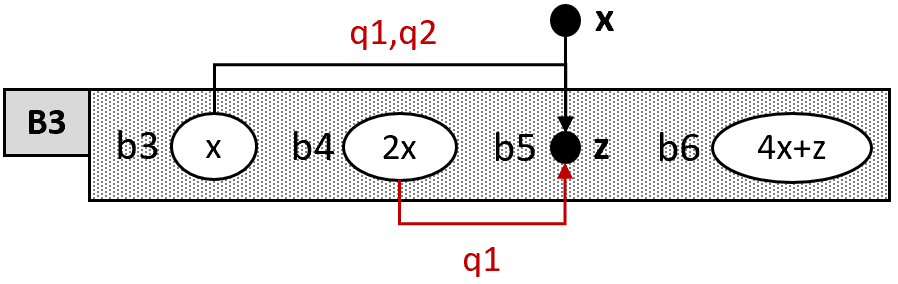}
\label{fig:predicates}
}
\vspace{-10pt}
\caption{Snapshots at graphlet and event levels}
\label{fig:snapshots_at_different_levels}
\end{figure*}

\begin{table*}[t]
    \begin{minipage}{.27\linewidth}
    \centering
    {\footnotesize\begin{tabular}{|c|p{2.5cm}|}
        \hline
        & \textbf{Trend count} 
        \\\hline\hline
        $b_3$ 
        & $x$ 
        \\\hline
        $b_4$ 
        & $x + count(b_3,Q) = 2x$ 
        \\\hline
        $b_5$ 
        & $x + count(b_3,Q) + count(b_4,Q) = 4x$ 
        \\\hline
        \multirow{2}{*}{$b_6$}
        & $x + count(b_3,Q) + count(b_4,Q) +$ 
        $count(b_5,Q) = 8x$ 
        \\\hline
    \end{tabular}}
    \caption{Shared propagation of \textbf{\textit{x}} within $\textbf{\textit{B}}_3$}
    \label{tab:snapshot}
    \end{minipage}
    \begin{minipage}{.35\linewidth}
    \centering
    {\footnotesize\begin{tabular}{|c|p{1.8cm}|p{1.8cm}|}
        \hline
        & \textbf{Query} $q_1$
        & \textbf{Query} $q_2$ 
        \\\hline\hline
        $x$ 
        & $sum(A_1,q_1) = 2$
        & $sum(C_2,q_2) = 1$ 
        \\\hline
        \multirow{4}{*}{$y$}
        & $value(x,q_1) +$ 
        $sum(B_3,q_1) +$  
        $sum(A_4,q_1) =$ 
        $2 + 15*2 + 2 = 34$
        & $value(x,q_2) +$ 
        $sum(B_3,q_2) +$
        $sum(C_5,q_2) =$ 
        $1 + 15*1 + 3 = 19$ 
        \\\hline
    \end{tabular}}
    \caption{Values of snapshots \textbf{\textit{x}} and \textbf{\textit{y}} per query}
    \label{tab:snapshots}
    \end{minipage} 
    \begin{minipage}{.35\linewidth}
    \centering
    {\footnotesize\begin{tabular}{|c|p{1.8cm}|p{1.8cm}|}
        \hline
        & \textbf{Query} $q_1$
        & \textbf{Query} $q_2$ 
        \\\hline\hline
        \multirow{3}{*}{$z$} 
        & $value(x,q_1) +$ 
        $count(b_3,q_1) +$ 
        $count(b_4,q_1) = 8$
        & $value(x,q_2) +$ 
        $count(b_3,q_2) = 2$ 
        \\\hline
        \multirow{3}{*}{$y$}
        & $value(x,q_1) +$ 
        $sum(B_3,q_1) +$ 
        $sum(A_4,q_1) = 34$
        & $value(x,q_2) +$ 
        $sum(B_3,q_2) +$ 
        $sum(C_5,q_2) = 15$ 
        \\\hline
    \end{tabular}}
    \caption{Values of snapshots \textbf{\textit{z}} and \textbf{\textit{y}} per query}
    \label{tab:snapshots2}
    \end{minipage} 
\end{table*}

\begin{example}
Continuing Example~\ref{ex:running_example}, a graph is maintained per each query in the workload $Q=\{q_1,q_2\}$ in Figure~\ref{fig:not-shared}. 
For readability, we sort all events by their types and timestamps. Events of types $A$, $B$, and $C$ are displayed as gray, white, and striped circles, respectively. We highlight the predecessor events of event $b_3$ by edges. All other edges are omitted for compactness. 
When $b_3$ arrives, two trends $(a_1,b_3)$ and $(a_2,b_3)$ are matched by $q_1$. Thus, $count$($b_3$,$q_1$) = $count(a_1,q_1) + count(a_2,q_1) = 2$. However, only one trend $(c_1,b_3)$ is matched by $q_2$. Thus, $count(b_3,q_2) = count(c_1,q_2) = 1$.
\end{example}


\textbf{Complexity Analysis}. 
Figure~\ref{fig:not-shared} illustrates that each event of type $B$ is stored and processed once for each query in the workload $Q$, introducing significant re-computation and replication overhead. 
Let $k$ denote the number of queries in the workload $Q$ and $n$  the number of events. Each query $q$ stores each matched event $e$ and computes the intermediate count of $e$ per Equation~\ref{eq:event_count}. All predecessor events of $e$ must be accessed, with $e$ having at most $n$ predecessor events.  
Thus, the time complexity of non-shared online trend aggregation is computed as follows:
\begin{equation}
\mathit{NonShared}(Q) = k \times n^2
\label{eq:nonshared-cost}
\end{equation}
\vspace{-3mm}

Events that are matched by $k$ queries are replicated $k$ times (Figure~\ref{fig:not-shared}). Each event stores its intermediate trend count. In addition, one final result is stored per query. Thus, the space complexity is $O(k \times n + k) = O(k \times n)$.





\subsection{Shared Online Trend Aggregation}
\label{sec:shared-approach}

In Equation~\ref{eq:nonshared-cost}, the overhead of processing each event once per query in the workload $Q$ is represented by the multiplicative factor $k$. 
Since the number of queries in a production workload may reach hundreds to thousands~\cite{ADLS, shared_clouds}, this re-computation overhead can be significant.
Thus, we design an efficient shared online trend aggregation strategy that encapsulates bursts of events of the same type in a graphlet such that the propagation of trend aggregates  within these graphlets can be shared among several queries. 


\begin{definition}(\textbf{Graphlet})
Let $q \in Q$ be a query and $T$ be a set of event types that appear in the pattern of $q$. 
A \textit{graphlet} $G_E$ is a graph of events of type $E$, if no events of type $E' \in T,\ E' \neq E,$ are matched by $q$ during the time interval $(e_\mathit{f}.time, e_l.time)$, where $e_\mathit{f}.time$ and $e_l.time$ are the time\-stamps of the first and the last events in $G_E$, respectively.
%
If new events can be added to a graphlet $G_E$ without violating the constraints above, the graphlet $G_E$ is called \textit{active}. Otherwise, $G_E$ is called \textit{inactive}.
\label{def:graphlet}
\end{definition}

\begin{definition}(\textbf{Shared Graphlet, \app\ Graph})
Let $E+$ be a Kleene sub-pattern that is shareable by queries $Q_E \subseteq Q$ (Definition~\ref{def:shareable-sub-pattern}). 
We call a graphlet $G_E$ of events of type $E$ a shared graphlet. 
The set of all interconnected shared and non-shared graphlets for a workload $Q$ is called a \app\ graph.
\label{def:shared-graphlet}
\end{definition}

\begin{example}
In Figure~\ref{fig:snapshot}, matched events are partitioned into six graphlets $A_1$--$B_6$ by their types and timestamps. For example, graphlets $B_3$ and $B_6$ are of type $B$. They are shared by queries $q_1$ and $q_2$.
In contrast to the non-shared strategy in Figure~\ref{fig:not-shared}, each event is stored and processed once for the entire workload $Q$. 
Events in $A_1$--$C_2$ are predecessors of events in $B_3$, while events in $A_1$--$C_5$ are predecessors of events in $B_6$. 
For readability, only the predecessor events of $b_3$ are highlighted by edges in Figure~\ref{fig:snapshot}. All other edges are omitted. 
$a_1$ and $a_2$ are predecessors of $b_3$ only for $q_1$, while $c_1$ is a predecessor of $b_3$ only for $q_2$. 
\label{ex:three}
\end{example}

Example~\ref{ex:three} illustrates the following two challenges of online shared event trend aggregation.

\textit{\textbf{Challenge 1}}. 
Given that event $b_3$ has different predecessors for queries $q_1$ and $q_2$, the computation of the intermediate trend count of $b_3$ (and all other events in graphlets $B_3$ and $B_6$) cannot be directly shared by queries $q_1$ and $q_2$.

\textit{\textbf{Challenge 2}}. 
If queries $q_1$ or $q_2$ have predicates, then not all previously matched events are qualified to contribute to the trend count of a new event. Assume that the edge between events $b_4$ and $b_5$ holds for $q_1$ but not for $q_2$ due to predicates, and all other edges hold for both queries. Then $count(b_4,q_1)$ contributes to $count(b_5,q_1)$, but $count(b_4,q_2)$ does not contribute to $count(b_5,q_2)$.

We tackle these challenges by introducing \textbf{\textit{snapshots}}.
Intuitively, a snapshot is a variable that its value corresponds to an intermediate trend aggregate per query. 
In Figure~\ref{fig:snapshot}, the propagation of a snapshot $x$ within graphlet $B_3$ is shared by queries $q_1$ and $q_2$. 
We store the values of $x$ per query (e.g., $x=2$ for $q_1$ and $x=1$ for $q_2$).

\begin{definition}(\textbf{Snapshot at Graphlet Level})
Let $E'$ and $E$ be distinct event types.
Let $E+$ be a Kleene sub-pattern that is shared by queries $Q_E \subseteq Q$, $q \in Q_E$.
Let $E' \in \mathit{pt}(E,q)$ and $G_{E'}$ and $G_E$ be graphlets of events of types $E'$ and $E$, respectively.
Assume for any events $e' \in G_{E'}, e \in G_E$, $e'.time < e.time$ holds. 
A snapshot $x$ of the graphlet $G_{E'}$ is a variable whose value is computed per query $q$ and corresponds to the intermediate trend count of the query $q$ at the end of the graphlet $G_{E'}$. 
\begin{equation}
\mathit{value}(x,q) = \mathit{sum}(G_{E'},q) = \sum_{e' \in G_{E'}} \mathit{count}(e',q)
\label{eq:snapshot}
\end{equation}

The propagation of snapshot $x$ through the graphlet $G_E$ follows Equation~\ref{eq:event_count} and is shared by queries $Q_E$.
\label{def:snapshot}
\end{definition}

\begin{example}
When graphlet $B_3$ starts, a snapshot $x$ is created. $x$ captures the intermediate trend count of query $q_1$ ($q_2$) based on the intermediate trend counts of all events in graphlet $A_1$ ($C_2$). $x$ is propagated through graphlet $B_3$ as shown in Figure~\ref{fig:no-predicates} and Table~\ref{tab:snapshot}. 

Analogously, when graphlet $B_6$ starts, a new snapshot $y$ is created. The value of $y$ is computed for queries $q_1$ ($q_2$) based on the value of $x$ for $q_1$ ($q_2$) and graphlets $B_3$ and $A_4$ ($C_5$). Figure~\ref{fig:snapshots} illustrates the connections between snapshots and graphlets. The edges from graphlets $A_1$ and $A_4$ ($C_2$ and $C_5$) hold only for query $q_1$ ($q_2$). Other edges hold for both queries $q_1$ and $q_2$.

Table~\ref{tab:snapshots} captures the values of snapshots $x$ and $y$ per query. For compactness, $sum(A_1,q_1)$ denotes the sum of intermediate trend counts of all events in $A_1$ that are matched by $q_1$ (Equation~\ref{eq:snapshot}). 
When the snapshot $y$ is created, the value of $x$ per query is plugged in to obtain the value of $y$ per query. The propagation of $y$ through $B_6$ is shared by $q_1$ and $q_2$. In this way, only one snapshot is propagated at a time to keep the overhead of snapshot maintenance low.
\end{example}

To enable shared trend aggregation despite expressive predicates, we now introduce snapshots at the event level.

\begin{definition}(\textbf{Snapshot at Event Level})
Let $G_E$ be a graphlet that is shared by queries $Q_E \subseteq Q$. Let $q_1,q_2 \in Q_E$ and $e_1, e_2 \in G_E$ be events such that the edge $(e_1,e_2)$ holds for $q_1$ but does not hold for $q_2$ due to predicates. 
A snapshot $z$ is the intermediate trend count of $e_2$ that is 
computed for $q_1$ and $q_2$ per Equation~\ref{eq:event_count} and propagated through the graphlet $G_E$ for all queries in $Q_E$.
\label{def:snapshot2}
\end{definition}
%


\begin{example}
In Figure~\ref{fig:predicates}, assume that the edge between events $b_4$ and $b_5$ holds for query $q_1$ but not for query $q_2$ due to predicates. All other edges hold for both queries. Then, $count(b_4,q_1)$ contributes to $count(b_5,q_1)$, but $count(b_4,q_2)$ does not contribute to $count(b_5,q_2)$. To enable shared processing of graphlet $B_3$ despite predicates, we introduce a new snapshot $z$ as the intermediate trend count of $b_5$ and propagate both snapshots $x$ and $z$ within graphlet $B_3$. 
Table~\ref{tab:snapshots2} summarizes the values of $z$ and $y$ per query. 
\end{example}

\textbf{Shared Online Trend Aggregation Algorithm} computes the number of trends per query $q \in Q$ in the stream $I$. For simplicity, we assume that the stream $I$ contains events within one pane.
For each event $e \in I$ of type $E$, Algorithm~\ref{algo:snapshot-propagation} constructs the \app\ graph and computes the trend count as follows. 

\textbf{\textit{\app\ graph construction}} (Lines~4--14).
When an event $e$ of type $E$ is matched by a query $q \in Q$, $e$ is inserted into a graphlet $G_E$ that stores events of type $E$ (Line~14). 
if there is no active graphlet $G_E$ of events of type $E$, we create a new graphlet $G_E$, mark it as active and store it in the \app\ graph $G$ (Lines~7--8). If the graphlet $G_E$ is shared by queries $Q_E \subseteq Q$, then we create a snapshot $x$ at graphlet level (Line~9). $x$ captures the values of intermediate trend counts per query per Equation~\ref{eq:snapshot} at the end of graphlet $G_{E'}$ that stores events of type $E',\ E' \in pt(E,q)$. We save the value of $x$ per query in the table of snapshots $S$ (Lines~10--13).
Also, for each query $q \in Q$ with event types $T$, we mark all graphlets $G_{E'}$ of events of type $E' \in T,\ E' \neq E,$ as inactive (Lines~4--6). 

\textbf{\textit{Trend count computation}} (Lines~16--24).
If $G_E$ is shared by queries $Q_E \subseteq Q$ and the set of predecessor events of $e$ is identical for all queries $q \in Q_E$, then we compute $count(e,q)$ per Equation~\ref{eq:event_count} (Lines~16--18).
If $G_E$ is shared but the sets of predecessor events of $e$ differ among the different queries in $Q_E$ due to predicates, then we create a snapshot $y$ as the intermediate trend count of $e$ (Line~19). 
We compute the value of $y$ for each query $q \in Q_E$ per Equation~\ref{eq:event_count} and save it in the table of snapshots $S$ (Line~20).
If $G_E$ is not shared, the algorithm defaults to the non-shared trend count propagation per Equation~\ref{eq:event_count} (Line~21).
If $E$ is an end type for a query $q \in Q$, we increment the final trend count of $q$ in the table of results $R$ by the intermediate trend count of $e$ for $q$ per Equation~\ref{eq:final_count} (Lines~22--23).
Lastly, we return the table of results $R$ (Line~24).

\begin{theorem}
Algorithm~\ref{algo:snapshot-propagation} returns correct event trend count for each query in the workload $Q$. 
\end{theorem}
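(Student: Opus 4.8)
The plan is to prove correctness by reducing the shared computation of Algorithm~\ref{algo:snapshot-propagation} to the non-shared baseline of Section~\ref{sec:non-shared-baseline}, which is already known to be correct~\cite{PLRM18}. Since under the simplifying assumptions every query computes a trend count and the final result of a query $q$ is, by Equation~\ref{eq:final_count}, a fixed sum of the intermediate trend counts $count(e,q)$ over end-type events, it suffices to establish the following invariant: for every event $e$ and every query $q$ that matches $e$, the value stored by the algorithm (either directly on $e$, or inside a snapshot associated with $e$ in the table $S$) equals the intermediate trend count $count(e,q)$ defined by Equations~\ref{eq:start_event}--\ref{eq:event_count}. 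Once the invariant is proven, the theorem follows immediately by applying Equation~\ref{eq:final_count} to the end-type events.

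I would prove the invariant by strong induction on the events in timestamp order. The inductive hypothesis asserts that every event processed before $e$ carries, for each relevant query, a value equal to its true intermediate trend count. For the inductive step I split into the three cases the algorithm distinguishes. (i)~If $e$ lies in a non-shared graphlet, the algorithm applies Equation~\ref{eq:event_count} verbatim (Line~21), so correctness is immediate from the hypothesis. (ii)~If $e$ lies in a shared graphlet $G_E$ and all queries in $Q_E$ agree on the predecessor set of $e$, I argue that the graphlet-level snapshot correctly seeds the propagation: by Definition~\ref{def:snapshot}, Equation~\ref{eq:snapshot}, and the accumulation performed in Lines~9--13, its per-query value equals the total $count$ contribution of all predecessors of $e$ that lie outside $G_E$, while Equation~\ref{eq:event_count} adds the within-graphlet predecessors during propagation. (iii)~If predicates cause the predecessor sets to diverge across $Q_E$, I appeal to Definition~\ref{def:snapshot2}: the extra event-level snapshot is propagated alongside so that, per query, exactly the edges that hold for that query contribute, reproducing the query-specific sum $\sum_{e' \in pe(e,q)} count(e',q)$ and hence matching Equation~\ref{eq:event_count}.

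The crux is the chaining argument underlying case (ii). I must establish that the snapshot value available at the start of $G_E$ accounts for the contributions of all predecessor events of the first event of $G_E$ across all earlier graphlets and all predecessor types, rather than only those in a single adjacent graphlet. The key sub-lemma is that each newly created snapshot equals the previous snapshot's value plus the $sum(\cdot,q)$ of every graphlet that became inactive since the previous snapshot was taken (Lines~4--13), and that, by the graphlet definition (Definition~\ref{def:graphlet}), these inactive graphlets form a partition of all intervening predecessor contributions grouped by type and time. Showing that this partition is both exhaustive and disjoint --- so that once events are grouped into graphlets no predecessor contribution is omitted or double-counted --- is the main obstacle. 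Once it is in place, cases (ii) and (iii) reduce to a term-by-term comparison with Equation~\ref{eq:event_count}, the invariant holds for $e$, and summing over end-type events via Equation~\ref{eq:final_count} yields the claimed correctness for each query in $Q$.
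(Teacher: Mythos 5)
Your proposal is correct and follows essentially the same route as the paper's own proof sketch: reduce correctness to the single-query, non-shared propagation of Equation~\ref{eq:event_count} established in~\cite{PLRM18}, and then argue that the graphlet-level and event-level snapshots seed and carry the per-query intermediate counts so that shared propagation reproduces the non-shared values. Your version is simply more explicit, spelling out the induction over events and isolating the chaining/partition sub-lemma (that consecutive snapshots plus the intervening inactive graphlets account for every predecessor contribution exactly once) that the paper compresses into the phrase ``analogously to trend count propagation through the \greta\ graph.''
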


\begin{proof}[Proof Sketch]
Correctness of the graph construction for a single query and the non-shared trend count propagation through the graph as defined in Equation~\ref{eq:event_count} are proven in~\cite{PLRM18}. 
Correctness of the snapshot computation per query as defined in Equation~\ref{eq:snapshot} follows from Equation~\ref{eq:event_count}. 
Algorithm~\ref{algo:snapshot-propagation} propagates snapshots through the \app\ graph analogously to trend count propagation through the \greta\ graph defined in~\cite{PLRM18}.
\end{proof}

\textbf{Data Structures}.
Algorithm~\ref{algo:snapshot-propagation} utilizes the following physical data structures.

(1) \textbf{\textit{\app\ graph}} $G$ is a set of all graphlets. Each graphlet has two metadata flags \textit{active} and \textit{shared} (Definitions~\ref{def:graphlet} and \ref{def:shared-graphlet}). 

(2) \textbf{\textit{A hash table of snapshot coefficients}} per event $e$. The intermediate trend count of $e$ may be an expression composed of several snapshots. 
In Figure~\ref{fig:predicates}, $count(b_6,Q) = 4x + z$. 
Such composed expressions are stored in a hash table per event that maps a snapshot to its coefficient. In this example, $x \mapsto 4$ and $z \mapsto 1$ for $b_6$.

(3) \textbf{\textit{A hash table of snapshots}} $S$ is a mapping from a snapshot $x$ and a query $q$ to the value of $x$ for $q$ (Tables~\ref{tab:snapshots} and \ref{tab:snapshots2}). 

(4) \textbf{\textit{A hash table of trend count results}} $R$ is a mapping from a query $q$ to its corresponding trend count.

\begin{algorithm}[!ht]
\begin{algorithmic}[1]
\Require Query workload $Q$, event stream $I$, \app\ graph $G$, hash table of snapshots $S$
\Ensure Hash table of results $R$ 
\State $G \leftarrow \emptyset$, $S, R \leftarrow$ empty hash tables
\ForAll {event $e \in I$ with $e.type=E$} 
    \State $//$ \textbf{\app\ graph construction}
    \ForAll {$q \in Q$ \text{ with event types }T}
        \ForAll {$E' \in T,\ E' \neq E$}
            \State $G_{E'} \leftarrow \mathit{getGraphlet}(G,E')$,
            $G_{E'}.\mathit{active} \leftarrow \mathit{false}$
        \EndFor
    \EndFor
    \If {\textbf{not} $G_E.\mathit{active}$}
        \State $G_E \leftarrow \mathit{createGraphlet()}$, $G_{E}.\mathit{active} \leftarrow \mathit{true}$,
        $G \leftarrow G \cup G_E$
        \If {$G_E.\mathit{shared}$ by $Q_E \subseteq Q$}
            $x \leftarrow \mathit{createSnapshot()}$ 
            \ForAll {$q \in Q_E$}
                \ForAll{$E' \in \mathit{pt}(E,q), E' \neq E$}
                    \State $G_{E'} \leftarrow \mathit{getGraphlet}(G,E')$
                    \State $S(x,q) \leftarrow S(x,q) + sum(G_{E'},q)$ \hspace{0.5cm}$//$ Eq.~5
                \EndFor
            \EndFor
        \EndIf    
    \EndIf
    \State insert $e$ into $G_E$
    \State $//$ \textbf{Trend count computation}
    \If {$G_E.\mathit{shared}$ by $Q_E \subseteq Q$}
        \If {$\forall q \in Q_E\ pe(e,q)$ are identical}
            \State $count(e,Q_E) \leftarrow count(e,q)$ \hspace{2.3cm}$//$ Eq.~2
        \Else\ $y \leftarrow \mathit{createSnapshot()}$, $count(e,Q_E) = y$
            \ForAll {$q \in Q_E$}
                $S(y,q) \leftarrow count(e,q)$ \hspace{0.2cm}$//$ Eq.~2
            \EndFor
          \EndIf
    \Else\ $count(e,q)$ \hspace{5.2cm}$//$ Eq.~2
    \EndIf
    \ForAll{$q \in Q$}
  	    \If {$E \in \mathit{end}(q)$} 
  		    $R(q) \leftarrow R(q) + count(e,q)$ $//$ Eq.~3
        \EndIf
    \EndFor
\EndFor
\State \Return $R$
\end{algorithmic}
\caption{\app\ shared online trend aggregation}
\label{algo:snapshot-propagation}
\end{algorithm}

\textbf{Complexity Analysis}.
%
We use the notations in Table~\ref{tab:notation} and Algorithm~\ref{algo:snapshot-propagation}.  
For each event $e$ that is matched by a query $q \in Q$, Algorithm~\ref{algo:snapshot-propagation} computes the intermediate trend count of $e$ in an online fashion. This requires access to all predecessor events of $e$. In the worst case, $n$ previously matched events are the predecessor events of $e$. Since the intermediate trend count of $e$ can be an expression that is composed of $s$ snapshots, the intermediate trend count of $e$ is stored in the hash table that maps snapshots to their coefficients. Thus, the time complexity of intermediate trend count computation is $O(n \times s)$. In addition, the final trend count is updated per query $q$ if $E$ is an end type of $q$ in $O(k \times s)$ time. In summary, the time complexity of trend count computation is $O(n \times (n \times s + k \times s)) = O(n^2 \times s)$ since $n \geq k$.


In addition, Algorithm~\ref{algo:snapshot-propagation} maintains snapshots to enable shared trend count computation.
To compute the values of $s$ snapshots for each query $q$ in the workload of $k$ queries, the algorithm accesses $g$ events in $t$ graphlets $G_{E'}$ of events of type $E' \in T,\ E' \neq E$. Thus, the time complexity of snapshot maintenance is $O(s \times k \times g \times t)$. 
In summary,  time complexity of Algorithm~\ref{algo:snapshot-propagation} is computed as follows:
\begin{equation}
\mathit{Shared}(Q) = n^2 \times s + s \times k \times g \times t
\label{eq:shared-cost}
\end{equation}

Algorithm~\ref{algo:snapshot-propagation} stores each matched event in the \app\ graph once for the entire workload.
Each shared event stores a hash table of snapshot coefficients.
Each non-shared event stores its intermediate trend count.
In addition, the algorithm stores snapshot values per query. 
Lastly, the algorithm stores one final result per query.
Thus, the space complexity is $O(n + n \times s + s \times k + k) = O(n \times s + s \times k)$.

\section{Dynamic Sharing Optimizer}
\label{sec:runtime}


We first model the runtime benefit of sharing trend aggregation  (Section~\ref{sec:dynamic-benefit}). Based on this benefit model, our \app\ optimizer makes runtime sharing decisions for a given set of queries (Section~\ref{sec:decisions}).
Lastly, we describe how to choose a set of queries that share a Kleene sub-pattern (Section~\ref{sec:queries}).


\subsection{Dynamic Sharing Benefit Model}
\label{sec:dynamic-benefit}

On the up side, shared trend aggregation avoids the re-computation overhead for each query in the workload. 
On the down side, it introduces overhead to maintain snapshots. 
Next, we quantify the trade-off between shared versus non-shared execution.

Equations~\ref{eq:nonshared-cost} and \ref{eq:shared-cost} determine the cost of non-shared and shared strategies of all events within the  window for the entire workload $Q$ based on stream statistics.
In contrast to these coarse-grained static decisions, the \app\ optimizer makes \textbf{\textit{fine-grained runtime decisions}} for each burst of events for a sub-set of queries $Q_E \subseteq Q$. 
Intuitively, a burst is a set of consecutive events of type $E$, the processing of which can be shared by  queries $Q_E$ that contain a $E+$ Kleene sub-pattern. The \app\ optimizer decides at runtime if sharing a burst is beneficial. In this way, beneficial sharing opportunities are harvested for each burst at runtime.



\begin{definition}(\textbf{Burst of Events})
Let $E+$ be a sub-pattern that is sharable by queries $Q_E$.
Let $T$ be the set of event types that appear in the patterns of queries $Q_E$, $E \in T$. 
A set of events of type $E$ within a pane is called a \textit{burst} $B_E$, if no events of type $E' \in T,\ E' \neq E,$ are matched by the queries $Q_E$ during the time interval $(e_\mathit{f}.time, e_l.time)$, where $e_\mathit{f}.time$ and $e_l.time$ are the time\-stamps of the first and the last events in $B_E$, respectively.
If no events can be added to a burst $B_E$ without violating the above constraints, the burst $B_E$ is called \textit{complete}.
\label{def:burst}
\end{definition}

Within each pane, events that belong to the same burst are buffered until a burst is complete. The arrival of an event of type $E'$ or the end of the pane indicates that the burst is complete. In the following, we refer to complete bursts as bursts for compactness.

\app\ restricts event types in a burst for the following reason. Assuming that a burst contained an event $e$ of type $E'$, the event $e$ could be matched by one query $q_1$ but not by another query $q_2$ in $Q_E$. Snapshots would have to be introduced to differentiate between the aggregates of $q_1$ and $q_2$ (Section~\ref{sec:shared-approach}). Maintenance of these snapshots may reduce the benefit of sharing. Thus, the previous sharing decision may have to be reconsidered as soon as the first event arrives that is matched  by some queries in $Q_E$.


\begin{definition}(\textbf{Dynamic Sharing Benefit})
Let $E+$ be a Kleene sub-pattern that is shareable by queries $Q_E$,
$B_E$ be a burst of events of type $E$,
$b$ be the number of events in $B_E$,
$s_c$ be the number of snapshots that are created from this burst $B_E$, and
$s_p$ be the number of snapshots that are propagated to compute the intermediate trend counts for the burst $B_E$. 
Let $G_E$ denote a shared graphlet and $G_E^i$ denote a set of non-shared graphlets (one graphlet per each query in $Q_E$).
Other notations are consistent with previous sections (Table~\ref{tab:notation}).

The \textit{benefit} of sharing a graphlet $G_E$ by the queries $Q_E$ is computed as the difference between the cost of the non-shared and shared execution of the burst $B_E$.

\vspace{-4mm}
\begin{align}
&\mathit{Shared}(G_E,Q_E) 
= b \times n \times s_p
+ s_c \times k \times g \times t
\nonumber\\
&\mathit{NonShared}(G_E^i,Q_E) 
= k \times b \times n
\nonumber\\
&\mathit{Benefit}(G_E,Q_E) 
= \mathit{NonShared}(G_E^i,Q_E)
- \mathit{Shared}(G_E,Q_E)
\label{eq:dynamic-benefit}
\end{align}

If $\mathit{Benefit}(G_E,Q_E)>0$, then it is beneficial to share trend aggregation within the graphlet $G_E$ by the queries $Q_E$.
\label{def:dynamic-benefit}
\end{definition}

Based on Definition~\ref{def:dynamic-benefit}, we conclude that the more queries $k$ share trend aggregation, the more events $g$ are in shared graphlets, and the fewer snapshots $s_c$ and $s_p$ are maintained at a time, the higher the benefit of sharing will be. Based on this conclusion, our dynamic \app\ optimizer decides to share or not to share online trend aggregation (Section~\ref{sec:decisions}).




\begin{definition}(\textbf{Dynamic Sharing Benefit})
Let $E+$ be a Kleene sub-pattern that is shareable by queries $Q_E$,
$b$ be the number of events of type $E$ in a burst,
$s_c$ be the number of snapshots that are created from this burst, and
$s_p$ be the number of snapshots that are propagated to compute the intermediate trend counts for the burst. 
Let $G_E$ denote a shared graphlet and $G_E^i$ denote a set of non-shared graphlets (one graphlet per each query in $Q_E$).
Other notations are consistent with previous sections (Table~\ref{tab:notation}).

The \textit{benefit} of sharing a graphlet $G_E$ by the queries $Q_E$ is computed as the difference between the cost of the non-shared and shared execution of the event burst.

\vspace{-3mm}
\begin{align}
\mathit{Shared}(G_E,Q_E) 
&= s_c \times k \times g \times p 
+ b \times (\log_2(g) + n \times s_p) \nonumber\\
\mathit{NonShared}(G_E^i,Q_E) 
&= k \times b \times (\log_2(g) + n) \nonumber\\
\mathit{Benefit}(G_E,Q_E) 
&= \mathit{NonShared}(G_E^i,Q_E)
- \mathit{Shared}(G_E,Q_E)
\label{eq:dynamic-benefit}
\end{align}

If $\mathit{Benefit}(G_E,Q_E)>0$, then it is beneficial to share trend aggregation within the graphlet $G_E$ by the queries $Q_E$.
\label{def:dynamic-benefit}
\end{definition}

Based on Definition~\ref{def:dynamic-benefit}, we conclude that the more queries $k$ share trend aggregation, the more events $g$ are in shared graphlets, and the fewer snapshots $s_c$ and $s_p$ are maintained at a time, the higher the benefit of sharing will be. Based on this conclusion, our dynamic \app\ optimizer decides to share or not to share online trend aggregation (Section~\ref{sec:decisions}).




\begin{figure*}[t]
\centering
\subfigure[{\tiny Shared $B_3$}]{
  \includegraphics[width=0.13\columnwidth]{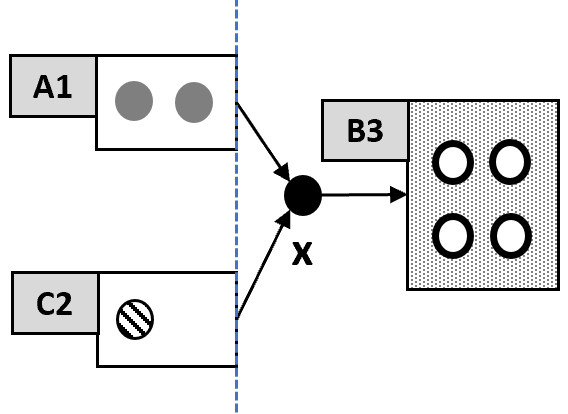}
\label{fig:benefit-monitoring-shared}
}
\subfigure[{\tiny Non-shared~$B_3$}]{
  \includegraphics[width=0.12\columnwidth]{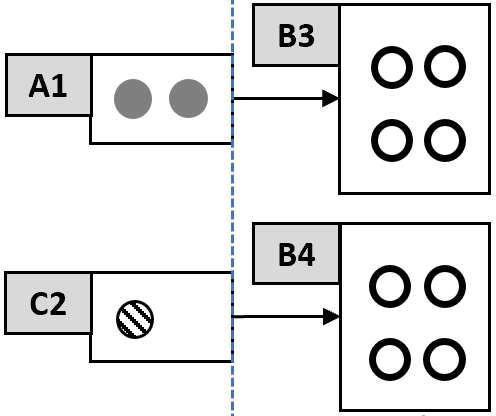}
\label{fig:benefit-monitoring-non-shared}
}
%
\subfigure[{\tiny Shared $B_3$}]{
  \includegraphics[width=0.12\columnwidth]{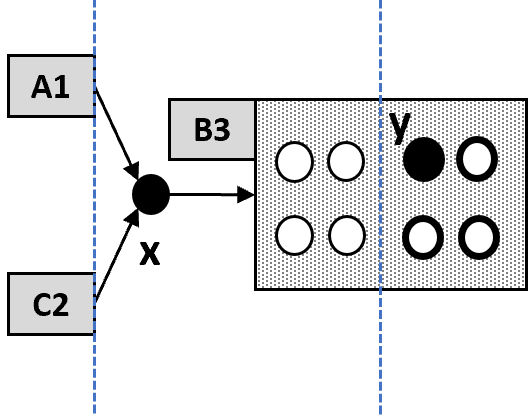}
\label{fig:decision-not-to-share-shared}
}
\subfigure[{\tiny Non-shared~$B_4,B_5$}]{
  \includegraphics[width=0.16\columnwidth]{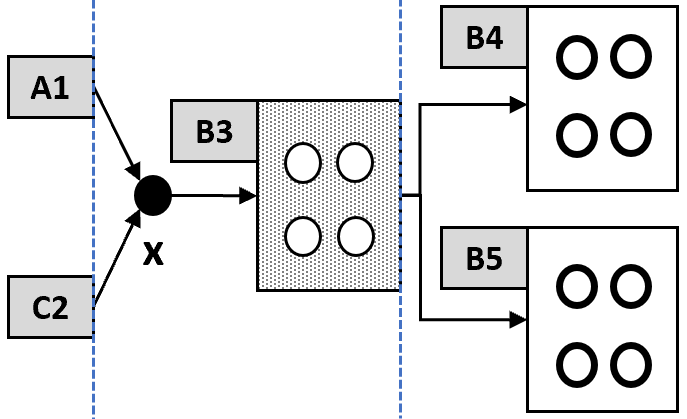}
\label{fig:decision-not-to-share-non-shared}
}
%
\subfigure[{\tiny Non-shared~$B_4,B_5$}]{
  \includegraphics[width=0.16\columnwidth]{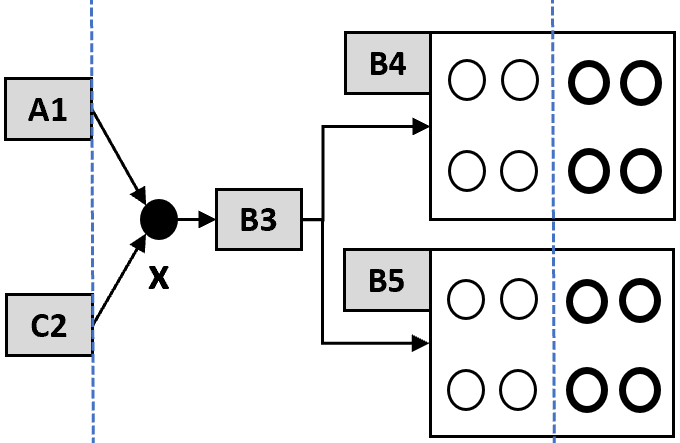}
\label{fig:decision-to-share-non-shared}
}
\subfigure[{\tiny Shared $B_6$}]{
  \includegraphics[width=0.21\columnwidth]{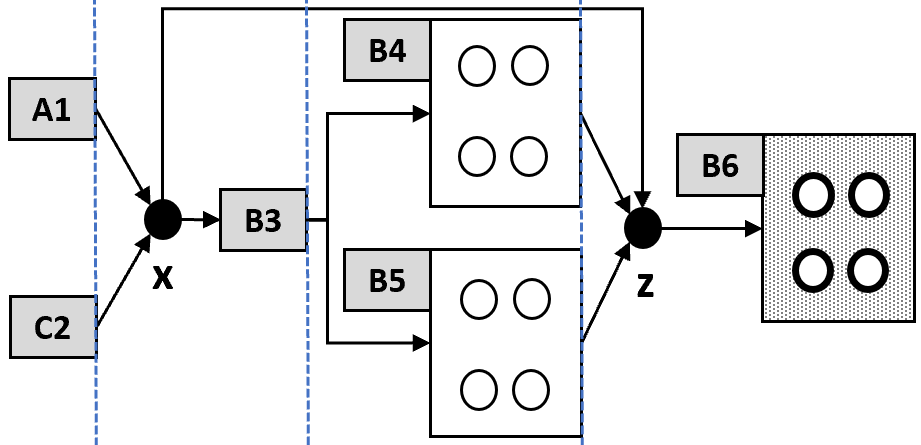}
\label{fig:decision-to-share-shared}
}
\caption{Dynamic sharing decisions. 
Decision to merge $B_3$ in (a) and (b).
Decision to split $B_3$ in (c) and (d).
Decision to merge $B_6$ in (e) and (f).}
\label{fig:desicions}
\end{figure*}

\subsection{Decision to Split and Merge Graphlets}
\label{sec:decisions}

Our dynamic \app\ optimizer monitors the sharing benefit depending on changing stream conditions at runtime.
Let $B+$ be a sub-pattern sharable by queries $Q_B = \{q_1,q_2\}$.
In Figure~\ref{fig:desicions}, 
pane boundaries are depicted as dashed vertical lines
and bursts of newly arrived events of type $B$ are shown as bold empty circles. For each burst, the optimizer has a choice of sharing (Figure~\ref{fig:benefit-monitoring-shared}) versus not sharing (Figure~\ref{fig:benefit-monitoring-non-shared}). It concludes that it is beneficial to share based on calculations in Equation~\ref{eq:benefit-monitoring}.
 
\vspace{-4mm}
\begin{align}
&\mathit{Shared}(B_3,Q_B) 
= 4 \times 7 \times 1 
+ 1 \times 2 \times 4 \times 2
= 44
\nonumber\\
&\mathit{NonShared}(\{B_3,B_4\},Q_B) 
= 2 \times 4 \times 7 = 56
\nonumber\\
&\mathit{Benefit}(B_3,Q_B)
= 56 - 44  = 12 > 0
\label{eq:benefit-monitoring}
\end{align}

\textbf{Decision to Split}.
However, when the next burst of events of type $B$ arrives, a new snapshot $y$ has to be created due to predicates during the shared execution in Figure~\ref{fig:decision-not-to-share-shared}. 
In contrast, the non-shared strategy processes queries $q_1$ and $q_2$ independently from each other (Figure~\ref{fig:decision-not-to-share-non-shared}). 
Now the overhead of snapshot maintenance is no longer justified by the benefit of sharing (Equation~\ref{eq:decision-not-to-share}).

\vspace{-4mm}
\begin{align}
&\mathit{Shared}(B_3,Q_B) 
= 4 \times 11 \times 2 
+ 1 \times 2 \times 8 \times 2
= 120
\nonumber\\
&\mathit{NonShared}(\{B_4,B_5\},Q_B) 
= 2 \times 4 \times 11 = 88
\nonumber\\
&\mathit{Benefit}(B_3,Q_B)
= 88 - 120 = -32 < 0
\label{eq:decision-not-to-share}
\end{align}

Thus, the optimizer decides to split the shared graph\-let $B_3$ into two non-shared graphlets $B_4$ and $B_5$ for the queries $q_1$ and $q_2$ respectively in Figure~\ref{fig:decision-not-to-share-non-shared}. Newly arriving events of type $B$ then must be inserted into both graphlets $B_4$ and $B_5$. Their intermediate trend counts are computed separately for the queries $q_1$ and $q_2$. The snapshot $x$ is replaced by its value for the query $q_1$ ($q_2$) within the graphlet $B_4$ ($B_5$). The graphlets $A_1$ and $C_2$ are collapsed.

\textbf{Decision to Merge}.
When the next burst of events of type $B$ arrives, we could either continue the non-shared trend count propagation within $B_4$ and $B_5$ (Figure~\ref{fig:decision-to-share-non-shared}) or merge $B_4$ and $B_5$ into a new shared graphlet $B_6$ (Figure~\ref{fig:decision-to-share-shared}). The \app\ optimizer concludes that the latter option is more beneficial in Equation~\ref{eq:decision-to-share}. As a consequence, a new snapshot $z$ is created as input to $B_6$. $z$ consolidates the intermediate trend counts of the snapshot $x$ and the graphlets $B_3$--$B_5$ per query $q_1$ and $q_2$.

\vspace{-4mm}
\begin{align}
&\mathit{Shared}(B_6,Q_B) 
= 4 \times 15 \times 1 
+ 1 \times 2 \times 4 \times 2 
= 76
\nonumber\\
&\mathit{NonShared}(\{B_4,B_5\},Q_B) 
= 2 \times 4 \times 15 = 120 
\nonumber\\
&\mathit{Benefit}(B_6,Q_B)
= 120 - 76 = 44 > 0
\label{eq:decision-to-share}
\end{align}


\textbf{Complexity Analysis}.
The runtime sharing decision per burst has constant time complexity because it simply plugs in locally available stream statistics into Equation~\ref{eq:dynamic-benefit}.
A graphlet split comes for free since we simply continue graph construction per query (Figure~\ref{fig:decision-not-to-share-non-shared}).
Merging graphlets requires creation of one snapshot and calculation of its values per query (Figure~\ref{fig:decision-to-share-shared}). Thus, the time complexity of merging is $O(k \times g \times t)$ (Equation~\ref{eq:shared-cost}).
Since our workload is fixed (Section~\ref{sec:basic}), the number of queries $k$ and the number of types $t$ per query are constants. Thus, the time complexity of merge is linear in the number of events per graphlet $g$.
Merging graphlets  requires storing the value of one snapshot per query. Thus, its space complexity is $O(k)$.


\subsection{Choice of Query Set}
\label{sec:queries}

To relax the assumption from Section~\ref{sec:decisions} that a set of queries $Q_E$ that share a Kleene sub-pattern $E+$ is given, we now select a sub-set of queries $Q_E$ from the workload $Q$ for which sharing $E+$ is the most beneficial among all other sub-sets of $Q$.
%
In general, the search space of all sub-sets of $Q$ is exponential in the number of queries in $Q$ since all combinations of shared and non-shared queries in $Q$ are considered. 
For example, if $Q$ contains four queries, Figure~\ref{fig:search-space} illustrates the search space of 12 possible execution plans of $Q$. 
Groups of queries in braces are shared. For example, the plan (134)(2) denotes that queries 1, 3, 4 share their execution, while query 2 is processed separately.
The search space ranges from maximally shared (top node) to non-shared (bottom node) plans. 
Each plan has its execution cost associated with it. For example, the cost of the plan (134)(2) is computed as the sum of $Shared(G_E,\{1,3,4\})$ and $NonShared(G_E^i,2)$ (Equation~\ref{eq:dynamic-benefit}). 
The goal of the dynamic \app\ optimizer is to find a plan with minimal execution cost.

\begin{figure}[!htb]
\centering
\includegraphics[width=0.6\columnwidth]{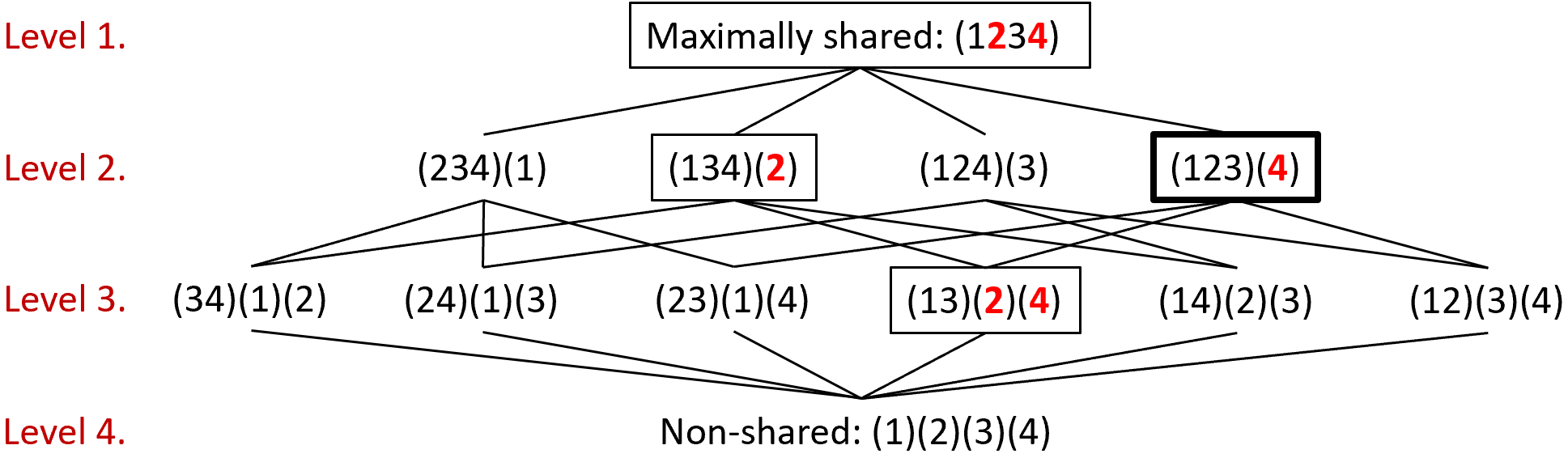}
\caption{Search space of sharing plans}
\label{fig:search-space}
\end{figure}

Traversing the exponential search space for each Kleene sub-pattern and each burst of events would jeopardize real-time responsiveness of \app. Fortunately, most plans in this search space can be pruned without loosing optimality  (Theorems~\ref{theo:pruning-1} and \ref{theo:pruning-2}). 
Intuitively, Theorem~\ref{theo:pruning-1} states that it is always beneficial to share the execution of a query that introduces no new snapshots.

\begin{theorem}
Let $E+$ be a Kleene sub-pattern that is shared by a set of queries $Q_E$ and not shared by a set of queries $Q_N$,
$Q_E \cap Q_N = \emptyset$,
$k_s = |Q_E|$, and
$k_n = |Q_N|$.
For a burst of events of type $E$,
let $q \in Q_E$ be a query that does not introduce new snapshots due to predicates for this burst of events (Definition~\ref{def:snapshot2}). 
Then the following follows:
\begin{align}
& \mathit{Shared}(Q_E) + \mathit{NonShared}(Q_N) \leq \nonumber\\
& \mathit{Shared}(Q_E \setminus \{q\}) + \mathit{NonShared}(Q_N \cup \{q\}) \nonumber
\end{align}
\label{theo:pruning-1}
\end{theorem}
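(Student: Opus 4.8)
The plan is to reduce the claimed inequality to a comparison of the \emph{marginal} cost of the single query $q$ under the two plans, by cancelling every term that is common to both sides. The two plans differ only in whether $q$ is processed inside the shared graphlet or in its own non-shared graphlet, so first I would isolate exactly which factors of Equation~\ref{eq:dynamic-benefit} change when $q$ is moved from $Q_E$ to $Q_N$.

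The key enabling observation is the hypothesis that, for this burst, $q$ introduces no new snapshots (Definition~\ref{def:snapshot2}). This means the snapshot counts $s_c$ and $s_p$ are identical for $Q_E$ and for $Q_E \setminus \{q\}$: dropping a query that forces no new snapshot leaves the set of created and propagated snapshots unchanged. Consequently, in $\mathit{Shared}(Q_E)$ versus $\mathit{Shared}(Q_E\setminus\{q\})$ the propagation term $b\,(\log_2 g + n\, s_p)$ — which carries no factor of the query count — is identical on both sides and cancels, while the snapshot-maintenance term $s_c\, k\, g\, p$ decreases by exactly $s_c\, g\, p$ as $k$ drops from $k_s$ to $k_s-1$. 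On the non-shared side, $\mathit{NonShared}(Q_N\cup\{q\})$ exceeds $\mathit{NonShared}(Q_N)$ by precisely one query's independent burst cost $b\,(\log_2 g + n)$, since that formula is linear in the number of non-shared queries.

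Subtracting the left-hand side from the right-hand side then leaves
\[
\mathrm{RHS} - \mathrm{LHS} \;=\; b\,(\log_2 g + n) \;-\; s_c\, g\, p,
\]
so the theorem is equivalent to the scalar inequality $s_c\, g\, p \le b\,(\log_2 g + n)$: the per-query cost of re-evaluating the already-existing snapshots for $q$ must not exceed the cost of propagating the full burst independently for $q$. Establishing this last step is the main obstacle, since it is exactly what makes ``always share a snapshot-free query'' optimal and cannot be treated as automatic. I would close it using the structural relationships among the parameters in Table~\ref{tab:notation} — in particular that a graphlet is contained within a window, giving $g \le n$; that the number of predecessor types $p$ per query is a small constant bounded by the pattern size; and that, because $q$ forces no new snapshot, $s_c$ stays bounded by the snapshot count already dictated by the other queries in $Q_E$. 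I would make explicit any mild assumption on $b$, $s_c$, or the ratio $g/n$ that the cost model relies on, so that the reduction from the two-plan comparison to this single inequality — which is the real content of the pruning argument — is stated cleanly and the bound itself is justified rather than assumed.
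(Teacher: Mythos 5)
Your proposal follows essentially the same route as the paper's proof: both sides are expanded under the cost model of Equation~\ref{eq:dynamic-benefit}, the hypothesis that $q$ introduces no new snapshots is used to keep $s_c$ and $s_p$ fixed so that all terms cancel except the marginal ones, and the claim reduces to the scalar inequality $s_c \times g \times p \leq b \times (\log_2(g) + n)$. The ``main obstacle'' you flag is closed in the paper exactly as you anticipate --- via $s_c \leq b$, $g \leq n$, and treating the number of predecessor types $p$ as negligible --- so your more cautious insistence on stating these assumptions explicitly is a presentational refinement rather than a different argument.
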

\vspace{-5mm}

\begin{proof}
Equation~\ref{eq:one} summarizes the cost of sharing the execution of queries $Q_E$ where $q \in Q_E$. 
\begin{align}
&\mathit{Shared}(Q_E) + \mathit{NonShared}(Q_N) \nonumber\\
&= k_s \times \underline{s_c \times g \times p} \nonumber\\
&+ b \times (\log_2(g) + n \times s_p) \nonumber\\
&+ k_n \times b \times (\log_2(g) + n)
\label{eq:one}
\end{align}

Now assume the execution of $q$ is not shared with other queries in $Q_E$. That is, $q$ is removed from set $Q_E$ and added to set $Q_N$. Then, $k_s$ is decremented by one and $k_n$ is incremented by one in Equation~\ref{eq:two}. All other cost factors remain unchanged. In particular, the number of created $s_c$ and propagated $s_p$ snapshots do not change.

\begin{align}
&\mathit{Shared}(Q_E \setminus \{q\}) + \mathit{NonShared}(Q_N \cup \{q\}) \nonumber\\
&= (k_s-1) \times s_c \times g \times p \nonumber\\
&+ b \times (\log_2(g) + n \times s_p) \nonumber\\
&+ (k_n+1) \times \underline{b \times (\log_2(g) + n)} 
\label{eq:two}
\end{align}

Equations~\ref{eq:one} and \ref{eq:two} differ by one additive factor $(s_c \times g \times p)$ if $q$ is shared versus one additive factor $(b \times (\log_2(g) + n))$ if $q$ is not shared. These additive factors are underlined in Equations~\ref{eq:one} and \ref{eq:two}. Since $s_c \leq b$, $g \leq n$, and the number of predecessor types $p$ per type per query is negligible compared to other cost factors, we conclude that $(s_c \times g \times p) \leq (b \times (\log_2(g) + n))$, i.e., it is beneficial to share the execution of $q$ with other queries in $Q_E$.
\end{proof}

We formulate the following pruning principle per Theorem~\ref{theo:pruning-1}.

\textbf{\textit{Snapshot-Driven Pruning Principle}}. 
Plans at Level 2 of the search space that do not share queries that introduced no snapshots are pruned. All descendants of such plans are also pruned.

\begin{example}
In Figure~\ref{fig:search-space}, assume queries 1 and 3 introduced no snapshots, while queries 2 and 4 introduced snapshots.
Then, four plans are considered because they share queries 1 and 3 with other queries. These plans are highlighted by frames. The other eight plans are pruned since they are guaranteed to have higher execution costs.
\end{example}

Theorem~\ref{theo:pruning-2} below states that 
if it is beneficial to share the execution of a query $q$ with other queries $Q$, a plan that processes the query $q$ separately from other queries $Q_E \subseteq Q$ will have higher execution costs than a plan that shares $q$ with $Q_E$.
The reverse of the statement also holds. Namely, if it is not beneficial to share the execution of a query $q$ with other queries $Q$, a plan that shares the execution of $q$ with other queries $Q_E \subseteq Q$ will have higher execution costs than a plan that processes $q$ separately from $Q_E$.


\begin{theorem}
Let $E+$ be a Kleene sub-pattern that is shareable by a set of queries $Q$, 
$Q = Q_E \cup Q_N$, and 
$q \in Q_E$. Then:
\begin{align}
\text{If } 
& \mathit{Shared}(Q) \leq \mathit{Shared}(Q \setminus \{q\}) + \mathit{NonShared}(q) \text{,} 
\label{eq:three}\\
\text{then } 
& \mathit{Shared}(Q_E) + \mathit{NonShared}(Q_N) \leq
\mathit{Shared}(Q_E \setminus \{q\}) +
\mathit{NonShared}(Q_N \cup \{q\}) 
\label{eq:four}
\end{align}

This statement also holds if we replace all $\leq$ by $\geq$.
\label{theo:pruning-2}
\end{theorem}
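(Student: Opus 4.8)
The plan is to reduce both inequalities~(\ref{eq:three}) and~(\ref{eq:four}) to one and the same condition on the per-burst cost parameters, from which the stated implication follows; in fact the two inequalities turn out to be equivalent, which is exactly what makes the ``also holds with $\geq$'' clause come for free.

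First I would record the two cost functions from Definition~\ref{def:dynamic-benefit} as affine functions of the number of sharing queries. For a shared set $S$ containing $q$,
\[
\mathit{Shared}(S) = |S|\cdot\underbrace{(s_c\cdot g\cdot p)}_{c_1} + \underbrace{b\,(\log_2(g)+n\cdot s_p)}_{c_2},
\]
and for any set $S$ processed without sharing,
\[
\mathit{NonShared}(S) = |S|\cdot\underbrace{b\,(\log_2(g)+n)}_{c_3}.
\]
The decisive structural point is that the slope $c_1$ of the shared cost and its intercept $c_2$ depend only on the burst-level quantities $s_c,s_p,g,p,b,n$, not on which queries populate $S$, and that $\mathit{NonShared}$ is additive over queries, so in particular $\mathit{NonShared}(Q_N\cup\{q\}) = \mathit{NonShared}(Q_N) + c_3$.

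Next I would simplify the conclusion~(\ref{eq:four}). Using additivity to cancel the common $\mathit{NonShared}(Q_N)$ term from both sides, (\ref{eq:four}) collapses to $\mathit{Shared}(Q_E)\le \mathit{Shared}(Q_E\setminus\{q\})+\mathit{NonShared}(q)$; substituting the affine form and cancelling the shared part $(|Q_E|-1)c_1+c_2$ leaves exactly $c_1\le c_3$, i.e.\ $s_c\cdot g\cdot p \le b\,(\log_2(g)+n)$. Performing the identical expansion on hypothesis~(\ref{eq:three}) --- expand $\mathit{Shared}(Q)$ and $\mathit{Shared}(Q\setminus\{q\})$, cancel $(|Q|-1)c_1+c_2$ --- reduces it to the very same inequality $c_1\le c_3$. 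Hence~(\ref{eq:three}) and~(\ref{eq:four}) are each equivalent to $c_1\le c_3$, so one holds iff the other does; replacing every $\le$ by $\ge$ merely flips this common condition to $c_1\ge c_3$, which settles the final sentence of the statement without any additional work.

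The step that needs the most care is the structural claim underpinning the affine form: that relocating $q$ between the shared and the non-shared group leaves $s_c$ and $s_p$ (hence $c_1,c_2$) unchanged, so that $\mathit{Shared}$ is genuinely affine in $|S|$ with set-independent coefficients. Intuitively this holds because, for a fixed burst $B_E$, the number of snapshots created and propagated is dictated by the predicate behaviour of $q$ on that burst and not by how many other queries happen to share it --- the same assumption used in the proof of Theorem~\ref{theo:pruning-1}, where $s_c$ and $s_p$ were held fixed while a query was moved. I would make this invariance explicit (treating $s_c,s_p,g,p,b,n$ as constants determined by the burst), after which every remaining manipulation is routine cancellation.
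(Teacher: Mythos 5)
Your algebra is fine as far as it goes, but the structural assumption carrying the whole argument --- that $s_c$ and $s_p$ are unchanged when $q$ is relocated, so that $\mathit{Shared}(S)=|S|\,c_1+c_2$ with coefficients independent of $S$ --- is exactly the hypothesis of Theorem~\ref{theo:pruning-1} (``$q$ introduces no new snapshots''), and Theorem~\ref{theo:pruning-2} is stated precisely for the complementary case. If $q$ does introduce snapshots due to its predicates, then $\mathit{Shared}(Q)$ and $\mathit{Shared}(Q\setminus\{q\})$ are evaluated with \emph{different} snapshot counts ($s_c$ and $s_p$ drop when $q$ leaves the shared set), so the cancellation ``subtract $(|Q|-1)c_1+c_2$ from both sides'' is not licensed: the $c_1,c_2$ appearing on the two sides of~(\ref{eq:three}) are not the same numbers. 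Under your reading, both~(\ref{eq:three}) and~(\ref{eq:four}) collapse to $s_c\cdot g\cdot p\le b(\log_2(g)+n)$, which the paper's own proof of Theorem~\ref{theo:pruning-1} argues always holds --- so your version of Theorem~\ref{theo:pruning-2} is true but vacuous, and in particular cannot support the ``$\ge$'' direction that the optimizer actually uses to decide \emph{not} to share.

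The paper's (admittedly informal) proof takes a different tack: it compares \emph{marginal} costs. Both~(\ref{eq:three}) and~(\ref{eq:four}) are of the form ``the cost of re-computing $q$ separately, $b(\log_2(g)+n)$, versus the cost of maintaining the snapshots that $q$ itself introduces,'' and the implication follows because this marginal trade-off attributable to $q$ is determined by $q$'s predicates and the burst alone, not by whether the ambient shared set is $Q$ or $Q_E$. To repair your proof you would replace the set-independent affine form by a statement about the increment $\mathit{Shared}(S)-\mathit{Shared}(S\setminus\{q\})$: show (or assume explicitly, as the cost model implicitly does) that this increment takes the same value for $S=Q$ and $S=Q_E$, covering both the per-query slope \emph{and} the change in $s_c,s_p$ caused by removing $q$. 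Note this also requires addressing a point your write-up skips entirely: $\mathit{Shared}(Q)$ and $\mathit{Shared}(Q_E)$ may themselves involve different snapshot counts if queries in $Q_N$ contribute snapshots, so the argument genuinely must be phrased in terms of $q$'s own marginal contribution rather than total costs.
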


\begin{proof}[Proof Sketch]
In Equation~\ref{eq:three}, if we do not share the execution of query $q$ with queries $Q$ and the execution costs increase, this means that the cost for re-computing $q$ is higher than the cost of maintenance of snapshots introduced by $q$ due to predicates.
Similarly in Equation~\ref{eq:four}, if we move the query $q$ from the set of queries $Q_E$ that share their execution to the set of queries $Q_N$ that are processed separately, the overhead of recomputing $q$ will dominate the overhead of snapshot maintenance due to $q$.
The reverse of Theorem~\ref{theo:pruning-2} can be proven analogously. 
\end{proof}

We formulate the following pruning principle per Theorem~\ref{theo:pruning-2}.

\textbf{\textit{Benefit-Driven Pruning Principle}}.
Plans at Level 2 of the search space that do not share a query that is beneficial to share are pruned.
Plans at Level 2 of the search space that share a query that is not beneficial to share are pruned.
All descendants of such plans are also pruned.

\begin{example}
In Figure~\ref{fig:search-space}, if it is beneficial to share query 2, then we can safely prune all plans that process query 2 separately. That is, the plan (134)(2) and all its descendants are pruned.
Similarly, if it is not beneficial to share query 4, we can safely exclude all plans that share query 4. That is, all siblings of (123)(4) and their descendants are pruned.
The plan (123)(4) is chosen (highlighted by a bold frame).
\label{ex:search_space}
\end{example}

\textbf{\textit{Consequence of Pruning Principles}}.
Based on all plans at Levels 1 and 2 of the search space, the optimizer classifies each query in the workload as either shared or non-shared. Thus, it chooses the optimal plan without considering plans at levels below 2.

\textbf{Complexity Analysis}.
Given a burst of new events, let $m$ be the number of queries that introduce new snapshots to share the processing of this burst of events. The number of plans at Levels 1 and 2 of the search space is $m+1$. 
Thus both time and space complexity of sharing plan selection is $O(m)$.




\begin{theorem}
Within one burst, \app\ has optimal time complexity.
\label{theo:optimality}
\end{theorem}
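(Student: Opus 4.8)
The plan is to prove Theorem~\ref{theo:optimality} in two complementary parts: first, that within a burst \app\ actually commits to a \emph{cost-minimal} sharing plan, and second, that the work \app\ spends deciding and realizing that plan is asymptotically the least any scheme could spend. Only the conjunction of these two facts yields ``optimal time complexity,'' so I would treat them as separate lemmas and combine them at the end.

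For the first part I would rely entirely on the two pruning principles already established. The key claim to make precise is that the per-query sharing decisions \emph{decouple}. By Theorem~\ref{theo:pruning-1}, every query that introduces no snapshot should be placed unconditionally in the shared set; by Theorem~\ref{theo:pruning-2}, for any remaining query $q$ the sign of the marginal cost of moving $q$ between the shared and non-shared sets is independent of how the other queries are classified. Concretely, in Equation~\ref{eq:dynamic-benefit} the membership of $q$ enters the total cost only through the counters $k_s$ and $k_n$ and through the snapshots $q$ itself contributes, so relocating $q$ perturbs the cost by a fixed additive amount regardless of the rest of the plan. I would argue that this separability makes the greedy, query-by-query classification that \app\ reads off from Levels~1 and~2 globally optimal: the optimum is exactly the plan that shares every query with positive individual sharing benefit and isolates every query with negative benefit, and no plan below Level~2 can improve on it.

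For the second part I would count the surviving work and then match it with a lower bound. After pruning, only the fully shared plan at Level~1 and the $m$ Level~2 plans that detach a single snapshot-introducing query remain, giving $m+1$ candidates; scoring each candidate is a constant-time substitution into Equation~\ref{eq:dynamic-benefit}, so the selection costs $O(m)$, and this overhead is dominated by the execution of the chosen plan. To upgrade this upper bound into genuine optimality I would supply a matching $\Omega(m)$ lower bound via an adversary argument: any correct algorithm must inspect each of the $m$ snapshot-introducing queries at least once, since flipping the benefit sign of even one of them can change which plan is cost-minimal, so skipping any such query forces an error. Combining $O(m)$ with $\Omega(m)$ gives $\Theta(m)$ decision time, which, together with the first part, certifies that \app\ attains the minimum per-burst cost with asymptotically minimal selection overhead.

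The main obstacle I anticipate is rigorously justifying the decoupling claim that underlies the first part --- that the marginal cost of sharing a query is truly independent of the other queries' assignments --- because it is precisely this independence that lets a greedy pass over Levels~1 and~2 certify a \emph{global} optimum without ever descending into the exponential search space. I would discharge it by showing that every term of Equation~\ref{eq:dynamic-benefit} depending on $q$'s membership is linear in $k_s$ and $k_n$ (or in $q$'s own snapshot contribution), so the cost is additively separable across queries; Theorem~\ref{theo:pruning-2} is exactly the single-query instance of this separability, and I would lift it by induction to an arbitrary subset of reassigned queries, concluding that the product of the locally optimal per-query decisions is the globally optimal plan.
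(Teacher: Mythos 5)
Your proposal proves a different statement from the one the paper's argument actually establishes, and the difference is the crux. You show (i) that \app\ selects the cost-minimal plan \emph{within its own search space of sharing plans}, and (ii) that the selection overhead is $\Theta(m)$. Neither fact, nor their conjunction, yields ``optimal time complexity'' for processing the burst: choosing the best among a family of plans says nothing about whether the best plan itself is asymptotically optimal for the underlying problem. The paper's proof is anchored in an external result that you never invoke: the single-query online trend aggregation of~\cite{PLRM18} has optimal time complexity. The paper's argument is then a two-case reduction to that anchor --- if the optimizer decides not to share, each query runs the provably optimal single-query algorithm; if it decides to share $E+$ among $Q_E$, the shared graphlet does the optimal single-query work once and the remaining queries in $Q_E$ ride along essentially for free, so the cost cannot exceed the optimal non-shared cost. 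Without that baseline, your ``minimum per-burst cost'' is only minimality relative to \app's cost model, which is a strictly weaker claim. Your Part~2 lower bound of $\Omega(m)$ on the decision procedure is a nice observation, but it bounds the \emph{optimizer's} overhead, not the \emph{executor's} complexity, and the theorem is about the latter.

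A secondary concern is the separability claim you flag as your main obstacle: it does not hold in the form you propose to prove it. In the shared cost of Equation~\ref{eq:dynamic-benefit}, the snapshot-maintenance term has the shape $k_s \times s_c \times g \times p$, a \emph{product} of two plan-dependent quantities ($k_s$, the number of sharing queries, and $s_c$, the number of created snapshots, which is itself a sum of per-query contributions). A product of sums is not additively separable across queries: with two queries each contributing one snapshot, the joint term is $2 \times 2 = 4$, while the per-query terms sum to $2$. So the marginal cost of relocating a query genuinely depends on how many other queries are shared and how many snapshots they contribute, and the induction ``lift Theorem~\ref{theo:pruning-2} to arbitrary subsets'' does not go through by linearity. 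Theorems~\ref{theo:pruning-1} and~\ref{theo:pruning-2} are deliberately stated as single-query-move inequalities (the latter conditionally), and the paper uses them only to prune, then cites them in one sentence to assert the chosen $Q_E$ maximizes the benefit; it does not attempt the global separability argument you sketch, and for good reason.
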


\begin{proof}
For a given burst of events, \app\ optimizer makes a decision to share or not to share depending on the sharing benefit in Section~\ref{sec:decisions}. 
If it is not beneficial to share, each query is processed separately and has optimal time complexity~\cite{PLRM18}.
If it is beneficial to share a pattern $E+$ by a set of queries $Q_E$, the time complexity is also optimal since it is optimal for one query $q \in Q_E$~\cite{PLRM18} and other queries in $Q_E$ are processed for free.
The set of queries $Q_E$ is chosen such that the benefit of sharing is maximal (Theorems~\ref{theo:pruning-1} and \ref{theo:pruning-2}).
\end{proof}

\textbf{Granularity of \app\ Sharing Decision}.
\app\ runtime sharing decisions are made per burst of events (Section~\ref{sec:decisions}). There can be several bursts per window (Definition~\ref{def:burst}). Within one burst, \app\ has optimal time complexity (Theorem~\ref{theo:optimality}).
According to the complexity analysis in Section~\ref{sec:decisions}, the choice of the query set has linear time complexity in the number of queries $m$ that introduce snapshots due to predicates. 
By Section~\ref{sec:queries}, the merge of graphlets has linear time complexity in the number of events $g$ per graphlet. 
\app\ would be optimal per window if it could make sharing decisions at the end of each window. However, waiting until all events per window arrive could introduce delays and jeopardise real-time responsiveness. Due to this low latency constraint, \app\ makes sharing decisions per burst, achieving significant performance gain over competitors (Section~\ref{sec:exp_results}).


\section{General Trend Aggregation Queries}
\label{sec:other}

While we focused on simpler queries so far, 
we now sketch how \app\ can be extended to support a broad class of trend aggregation queries as per Definition~\ref{def:query}.

\textbf{Disjunctive or Conjunctive Pattern}.
Let $P$ be a disjunctive or a conjunctive pattern and $P_1, P_2$ be its sub-patterns (Definition~\ref{def:pattern}). In contrast to event sequence and Kleene patterns,
$P$ does not impose a time order constraint upon trends matched by $P_1$ and $P_2$. 
Let $\mycount(P)$ denote the number of trends matched by $P$. $\mycount(P)$ can be computed based on $\mycount(P_1)$ and $\mycount(P_2)$ as defined below. The processing of $P_1$ and $P_2$ can be shared. 
Let $P_{1,2}$ be the pattern that detects trends matched by both $P_1$ and $P_2$. 
%
%
Let 
$C_{1,2} = \mycount(P_{1,2})$,
$C_1 = \mycount(P_1) - C_{1,2}$, and
$C_2 = \mycount(P_2) - C_{1,2}$.
$C_{1,2}$ is subtracted to avoid counting trends matched by $P_{1,2}$ twice.

Disjunctive pattern $(P_1 \vee P_2)$  matches a trend that is a match of $P_1$ or $P_2$. 
$\mycount(P_1 \vee P_2) = C_1 + C_2 + C_{1,2}$.

Conjunctive pattern $(P_1 \wedge P_2)$  matches a pair of trends $tr_1$ and $tr_2$ where $tr_1$ is a match of $P_1$ and $tr_2$ is a match of $P_2$. 
$\mycount(P_1 \wedge P_2) = 
C_1 * C_2 + 
C_1 * C_{1,2} +
C_2 * C_{1,2} +
\binom{C_{1,2}}{2}$
since each trend detected only by $P_1$ (not by $P_2$) is combined with each trend detected only by $P_2$ (not by $P_1$). In addition, each trend detected by $P_{1,2}$ is combined with each other trend detected only by $P_1$, only by $P_2$, or by $P_{1,2}$.


\textbf{Pattern with Negation} $\seq(P_1,\mynot\ N, P_2)$ is 
split into positive $\seq(P_1,P_2)$ and negative $N$ sub-patterns at compile time. At runtime, we maintain separate graphs for positive and negative sub-patterns. When a negative sub-pattern $N$ finds a match $e_n$, we disallow connections from matches of $P_1$ before $e_n$ to matches of $P_2$ after $e_n$. Aggregates are computed the same way~\cite{PLRM18}.

\textbf{Nested Kleene Pattern} $P=(\seq(P_1,P_2+))+$.
Loops exist at template level but not at graph level because previous events connect to new events in a graph but never the other way around due to temporal order constraints (compare Figures~\ref{fig:templates} and \ref{fig:snapshot}). 
The processing of $P$ and its sub-patters can be shared by several queries containing these patterns as illustrated by Example~\ref{ex:nested}. 

\begin{figure}[ht]
\centering
\includegraphics[width=0.25\columnwidth]{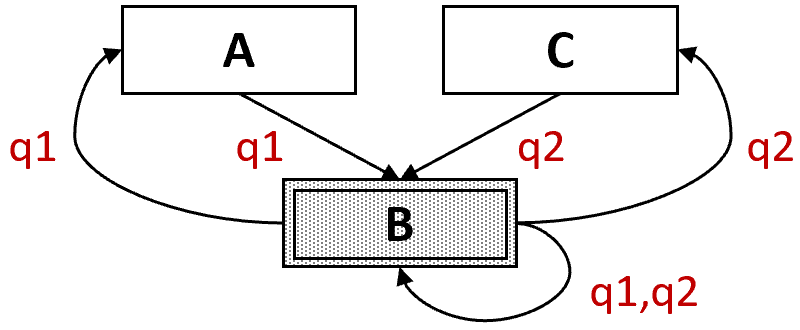}
\caption{\app\ query template for $Q$}
\label{fig:merged-template-2}
\end{figure}

\begin{example}
Consider query $q_1$ with pattern $(\seq(A,B+))+$ and query $q_2$ with pattern $(\seq(C,B+))+$. Figure~\ref{fig:merged-template-2} shows the merged template for the workload $Q=\{q_1,q_2\}$.
In contrast to the template in Figure~\ref{fig:merged_template}, there are two additional transitions (from $B$ to $A$ for $q_1$ and from $B$ to $C$ for $q_2$) forming two additional loops in the template.
Therefore, in addition to predecessor type relations in Example~\ref{ex:template_one_query}
($pt(B,q_1) = \{A,B\}$ and 
$pt(B,q_2) = \{C,B\}$),
two new predecessor type relations exist. 
Namely,
$pt(A,q_1) = \{B\}$ and 
$pt(C,q_2) = \{B\}$. 

Consider the stream in Figure~\ref{fig:snapshot}.
Similarly to Example~\ref{ex:three},
events in $A_1$--$C_2$ are predecessors of events in $B_3$, 
while events in $A_1$--$C_5$ are predecessors of events in $B_6$. 
Because of the additional predecessor type relations in the template in Figure~\ref{fig:merged-template-2}, 
events in $B_3$ are predecessors of events in $A_4$--$C_5$.
Due to these additional predecessor event relations,
more trends are now captured in the \app\ graph in Figure~\ref{fig:snapshot} and the intermediate and final trend counts have now higher values.
However, 
Definitions~\ref{def:graphlet}--\ref{def:dynamic-benefit} and Theorems~\ref{theo:pruning-1}--\ref{theo:optimality} hold and 
Algorithm~\ref{algo:snapshot-propagation} applies to share trend aggregation among queries in the workload $Q$.
\label{ex:nested}
\end{example}

%
%
%
%


%

\section{Experimental Evaluation}
\label{sec:experiments}

\subsection{Experimental Setup}

\textbf{Infrastructure}.
We have implemented \app\ in Java with JDK 1.8.0\_181 running on Ubuntu 14.04 with 16-core 3.4GHz CPU and 128GB of RAM. Our code is available online~\cite{hamlet_code}. We execute each experiment three times and report their average results here.

\textbf{Data Sets}. 
We evaluate \app\ using four data sets.

$\bullet$~\textit{New York city taxi and Uber real data set}~\cite{uber1} contains 2.63 billion taxi and Uber trips in New York City in 2014--2015. Each event carries a time stamp in seconds, driver and rider identifiers, pick-up and drop-off locations, number of passengers, and price. The average number of events per minute is 200.

$\bullet$~\textit{Smart home real data set}~\cite{smarthome} contains 4055 million measurements for 2125 plugs in 40 houses. Each event carries a timestamp in seconds, measurement, house identifiers, and voltage measurement value. The average number of events per minute is 20K.

$\bullet$~\textit{Stock real data set}~\cite{stockStream} contains up to 20 years of stock price history. Our sample data contains 2 million transaction records of 220 companies for 8 hours. Each event carries a time stamp in minutes, company identifier, price, and volume. The average number of events per minute is 4.5K.

$\bullet$~\textit{Ridesharing data set} was created by our stream generator to control the rate and distribution of events of different types in the stream. This stream contains events of 20 event types such as request, pickup, travel, dropoff, cancel, etc. Each event carries a time stamp in seconds, driver and rider ids, request type, district, duration, and price. The attribute values are randomly generated. The average number of events per minute is 10K.

\textbf{Event Trend Aggregation Queries}.
For each data set, we generated workloads similar to queries $q_1$--$q_3$ in Figure~\ref{fig:queries}. We experimented with the  two types of workloads described below.

$\bullet$~The first workload focuses on sharing Kleene closure because this is the most expensive operator in event trend aggregation queries (Definition~2.2). 
Further, the sharing of Kleene closure is a much overlooked topic in the literature; while the sharing of other query clauses (windows, grouping, predicates, and aggregation) has been well-studied in prior research and systems~\cite{AW04, GSCL12, KWF06, LMTPT05, LRGGWAM11}. Thus, queries in this workload are similar to Examples~\ref{ex:template_one_query}--\ref{ex:search_space}. Namely, they have different patterns but their sharable Kleene sub-pattern, window, groupby clause, predicates, and aggregates are the same. We evaluate this workload in Figures~\ref{fig:executor1}--\ref{fig:executor2}.

$\bullet$~The second workload is more diverse since the queries have sharable Kleene patterns of length ranging from 1 to 3, windows sizes ranging from 5 to 20 minutes, different aggregates (e.g., COUNT, AVG, MAX, etc.), as well as groupbys and predicates on a variety of event types. We evaluate this workload in Figures~\ref{fig:optimizer}--\ref{fig:opt-memory}.
%

The rate of events differs in different real data sets~\cite{uber1, smarthome, stockStream} that we used in our experiments. The window sizes are also different in the query workloads per data set. To make the results comparable across data sets, we vary the number of events per minute by a speed-up factor; which corresponds to the number of events per window divided by the window size in minutes.
The default number of events per minute per data set is included in the description of each data set.
Unless stated otherwise, 
the workload consists of 50 queries.
We vary  major cost factors (Definition~\ref{def:dynamic-benefit}), namely,
the number of events and
the number of queries.


\textbf{Methodology}. 
We experimentally compare \app\ to the following state-of-the-art approaches:

$\bullet$~\textit{MCEP}~\cite{KS19} is the most recently published state-of-the-art shared two-step approach. MCEP constructs all event trends prior to computing their aggregation. As shown in~\cite{KS19}, it shares event trend construction. It outperforms other shared two-step approaches SPASS~\cite{RLR16} and MOTTO~\cite{ZVDH17}.

$\bullet$~\textit{\sharon}~\cite{PRLRM18} is a shared approach that computes event sequence aggregation online. That is, it avoids sequence construction by incrementally maintaining a count for each pattern. \sharon\ does not support Kleene closure. To mimic Kleene queries, we flatten them as follows. For each Kleene pattern $E+$, we estimate the length $l$ of the longest match of $E+$ and specify a set of fixed-length sequence queries that cover all possible lengths up to $l$. 

$\bullet$~\textit{\greta}~\cite{PLRM18} supports Kleene closure and computes event trend aggregation online, i.e, without constructing all event trends. It achieves this online event trend aggregation by encoding all matched events and their adjacency relationships in a graph. However, \greta\ does not optimize for sharing a workload of queries. That is, each query is processed independently as described in Section~\ref{sec:non-shared-baseline}.

\textbf{Metrics}.
We measure \textit{latency} in seconds as the average time difference between the time point of the aggregation result output by a query in the workload and the arrival time of the latest event that contributed to this result.
\textit{Throughput} corresponds to the average number of events processed by all queries per second.
\textit{Peak memory consumption}, measured in bytes, corresponds to the maximal memory required to store snapshot expressions for \app, the current event trend for MCEP, aggregates for \sharon, and matched events for \app, MCEP, and \greta.


\subsection{Experimental Results}
\label{sec:exp_results}


\textbf{\app\ versus State-of-the-art Approaches}.
In Figures~\ref{fig:executor1} and \ref{fig:memory}, we measure all metrics of all approaches while varying the number of events per minute from 10K to 20K and the number of queries in the workload from 5 to 25. We intentionally selected this setting to ensure that the two-step approach MCEP, the non-shared approach \greta, and the fixed-length sequence aggregation approach \sharon\ terminate within hours.

\begin{figure}[!htb]
\centering
\subfigure[Latency vs $\#$events]{
\includegraphics[width=0.3\columnwidth]{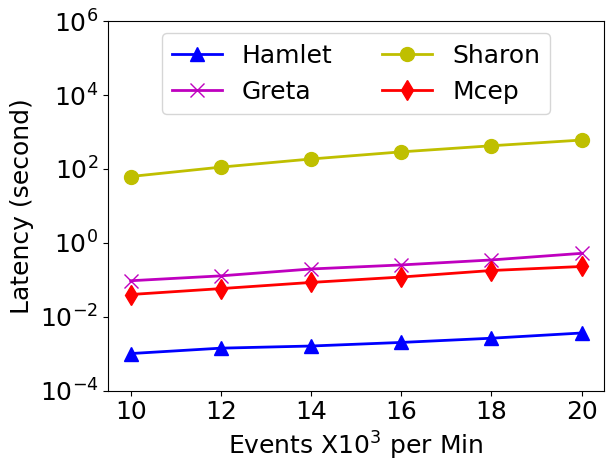}  
\label{fig:e1-latency-events}
}
\hspace{-10pt}
\subfigure[Latency vs $\#$queries]{
\includegraphics[width=0.3\columnwidth]{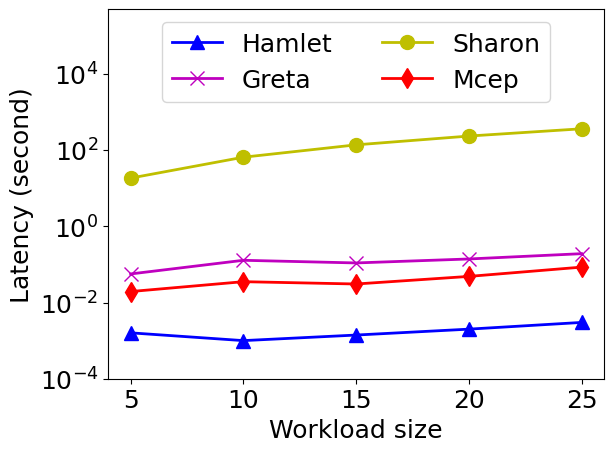}
\label{fig:e1-latency-queries}
}\\
\subfigure[Throughput vs $\#$events]{
\includegraphics[width=0.3\columnwidth]{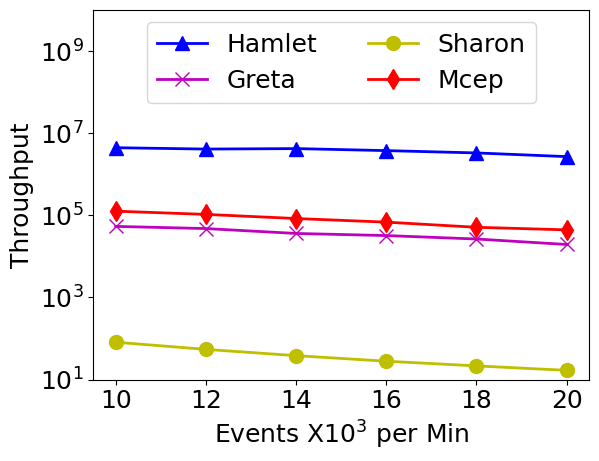}
\label{fig:e1-throughput-events}
}
\hspace{-10pt}
\subfigure[Throughput vs $\#$queries]{
\includegraphics[width=0.3\columnwidth]{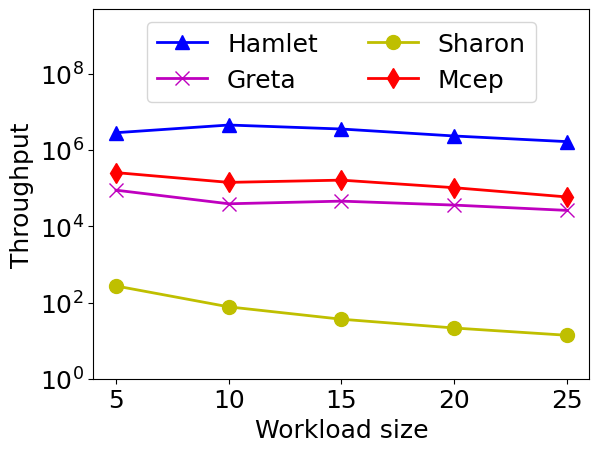}
\label{fig:e1-throughput-queries}
}
\caption{\app\ versus state-of-the-art approaches (Ridesharing)}
\label{fig:executor1}
\end{figure}

With respect to throughput, \app\ consistently outperforms 
\sharon\ by 3--5 orders of magnitude,
\greta\ by 1--2 orders of magnitude, and
MCEP 7--76-fold
(Figures~\ref{fig:e1-throughput-events} and \ref{fig:e1-throughput-queries}).
We observe similar improvement with respect to latency in Figures~\ref{fig:e1-latency-events} and \ref{fig:e1-latency-queries}. 
While \app\ terminates within 25 milliseconds in all cases, 
\sharon\ needs up to 50 minutes, 
\greta\ up to 3 seconds, and 
MCEP up to 1 second.
With respect to memory consumption, \app, \greta, and MCEP perform similarly, while \sharon\ requires 2--3 orders of magnitude more memory than \app\ in Figure~\ref{fig:memory}.

Such poor performance of \sharon\ is not surprising because \sharon\ does not natively support Kleene closure. To detect all Kleene matches, \sharon\ runs a workload of fixed-length sequence queries for each Kleene query. As Figure~\ref{fig:executor1} illustrates, this overhead dominates the latency and throughput of \sharon. 
In contrast to \sharon, \greta\ and MCEP terminate within a few seconds in this low setting because both approaches not only support Kleene closure but also optimize its processing. In particular, \greta\ computes trend aggregation without constructing the trends but does not share trend aggregation among different queries in the workload.
MCEP shares the construction of trends but computes trend aggregation as a post-processing step. Due to these limitations, \app\ outperforms both \greta\ and MCEP with respect to all metrics.

\begin{figure}[!htb]
	\centering
    \subfigure[Memory vs $\#$events]{
    	\includegraphics[width=0.3\columnwidth]{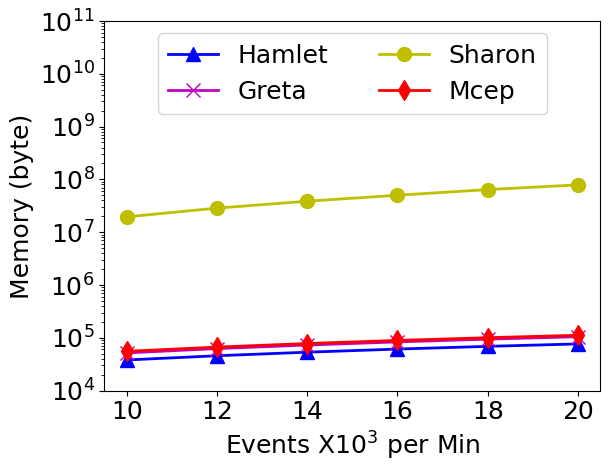}
    	 \label{fig:memory-queries}
	}
	\subfigure[Memory vs $\#$queries]{
    	\includegraphics[width=0.3\columnwidth]{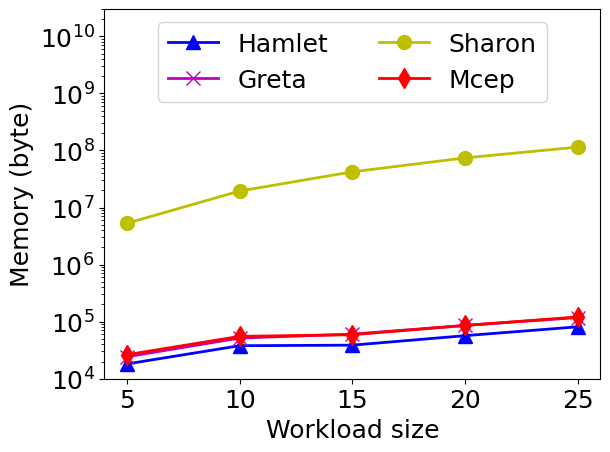}
    	\label{fig:memory-events}
	}
    \caption{\app\ vs state-of-the-art (Ridesharing)}
    \label{fig:memory}
\end{figure}

\begin{figure}[!htb]
	\centering
    \subfigure[Latency vs $\#$events (NYC)]{
    	\includegraphics[width=0.3\columnwidth]{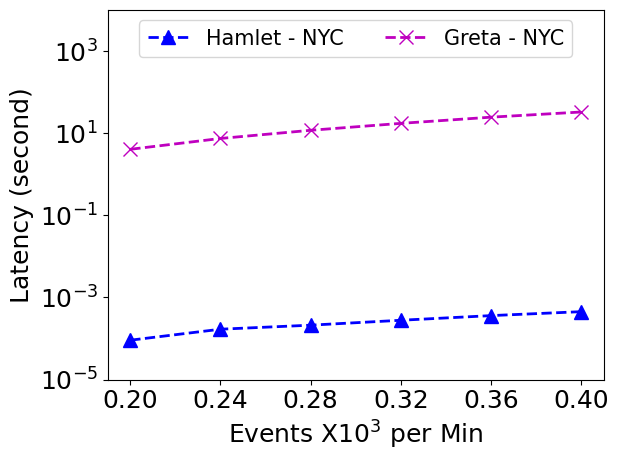}
    	 \label{fig:e2-latency-nyc-events}
	}\hspace{-8pt}
	\subfigure[Latency vs $\#$events (SH)]{
    	\includegraphics[width=0.3\columnwidth]{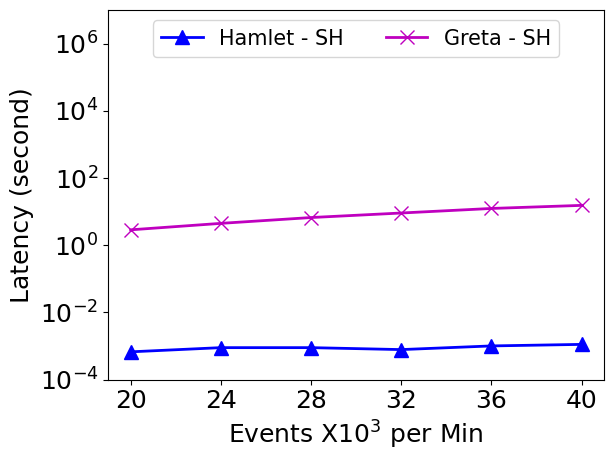}
    	 \label{fig:e2-latency-sh-events}
	}\\
	\subfigure[Throughput vs $\#$events (NYC)]{
    	\includegraphics[width=0.3\columnwidth]{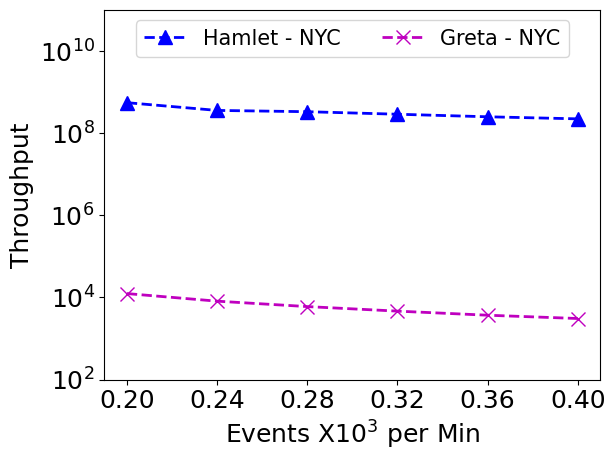}
    	 \label{fig:e2-throughput-nyc-events}
	}\hspace{-8pt}
	\subfigure[Throughput vs $\#$events (SH)]{
    	\includegraphics[width=0.3\columnwidth]{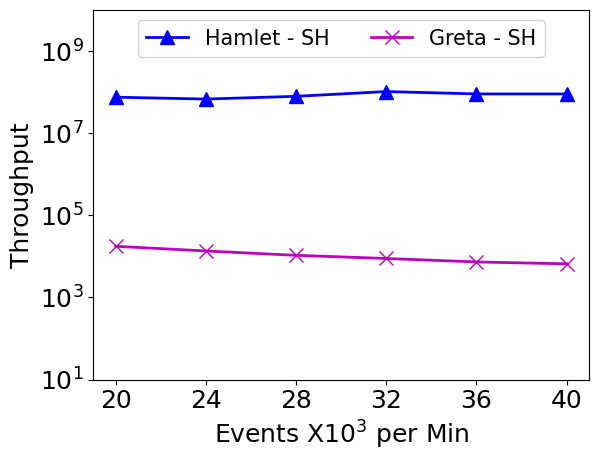}
    	\label{fig:e2-throughput-sh-events}
	}\\
	\subfigure[Memory vs $\#$events (NYC)]{
    	\includegraphics[width=0.3\columnwidth]{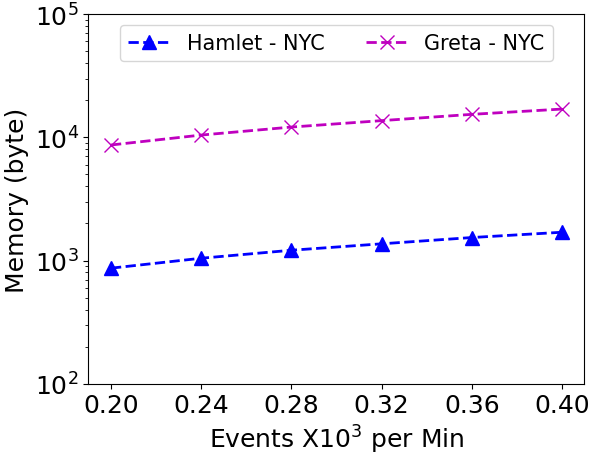}
    	 \label{fig:e2-memory-nyc-events}
	}\hspace{-8pt}
	\subfigure[Memory vs $\#$events (SH)]{
    	\includegraphics[width=0.3\columnwidth]{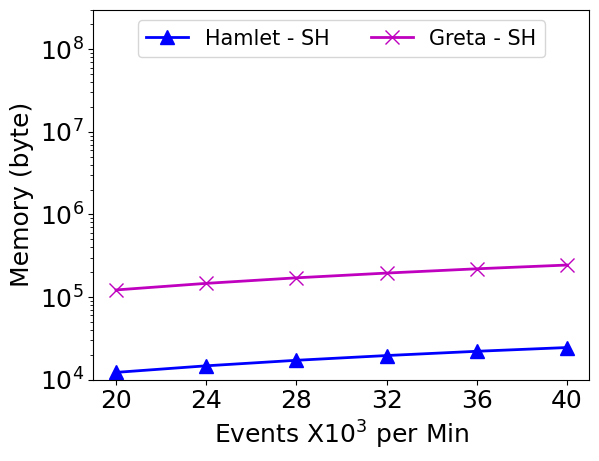}
    	\label{fig:e2-memory-sh-events}
	}\\
	\subfigure[Latency vs $\#$queries]{
    	\includegraphics[width=0.3\columnwidth]{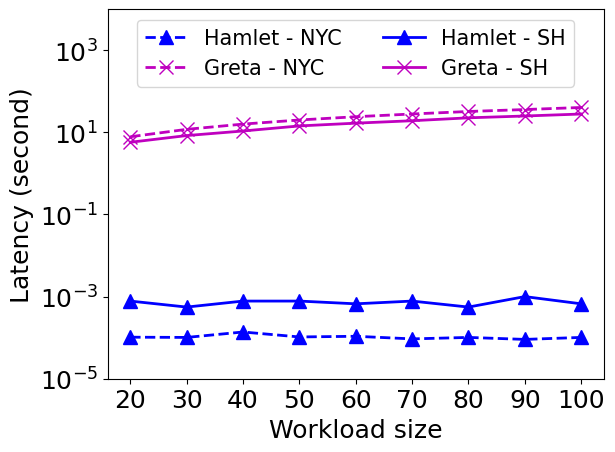}
    	 \label{fig:e2-latency-queries}
	}\hspace{-8pt}
	\subfigure[Throughput vs $\#$queries]{
    	\includegraphics[width=0.3\columnwidth]{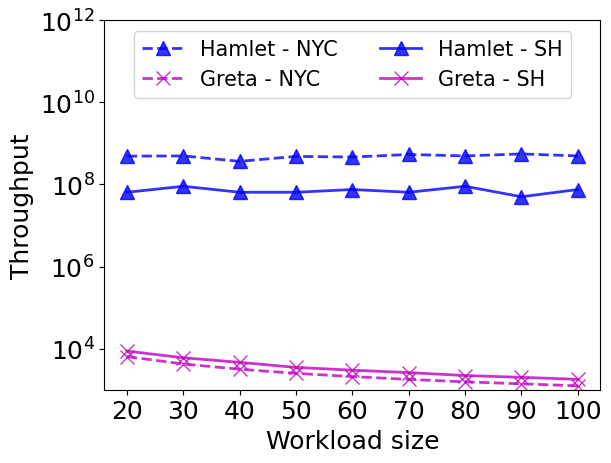}
    	\label{fig:e2-throughput-queries}
	}
    \caption{\app\ versus state-of-the-art approaches (NY City Taxi (NYC) and Smart Home (SH) data sets)}
    \label{fig:executor2}
\end{figure}

However, the low setting in Figures~\ref{fig:executor1} and \ref{fig:memory} does not reveal the full potential of \app. Thus in Figure~\ref{fig:executor2}, we compare \app\ to the most advanced state-of-the-art online trend aggregation approach \greta\ using two real data sets. We measure latency and throughput, while varying the number of events per minute and the number of queries in the workload. 
\app\ consistently outperforms \greta\ with respect to throughput and latency by 3--5 orders of magnitude. In practice this means that the response time of \app\ is within half a second, while \greta\ runs up to 2 hours and 17 minutes for 400 events per minute in Figure~\ref{fig:e2-latency-nyc-events}.

\textbf{Dynamic versus Static Sharing Decision}.
Figures~\ref{fig:optimizer} and \ref{fig:opt-memory} compare the effectiveness of \app\ dynamic sharing decisions to static sharing decisions. 
Each burst of events that can be shared contains 120 events on average in the stock data set. Our \app\ dynamic optimizer makes sharing decisions at runtime per each burst of events (Section~\ref{sec:dynamic-benefit}). The \app\ executor splits and merges graphlets at runtime based on these optimization instructions (Section~\ref{sec:decisions}). The number of all graphlets ranges from 400 to 600, while the number of shared graphlets ranges from 360 to 500. In this way, \app\ efficiently shares the beneficial Kleene sub-patterns within a subset of queries during its execution.

\begin{figure}[!htb]
	\centering
    \subfigure[Latency vs $\#$events]{
    	\includegraphics[width=0.3\columnwidth]{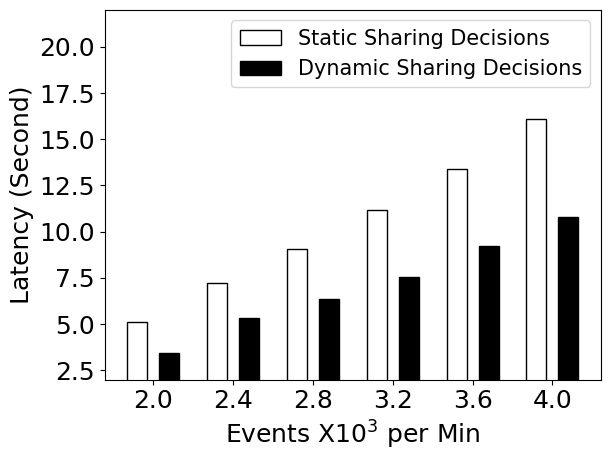}
    	 \label{fig:opt-latency-events}
	}\hspace{-8pt}
	\subfigure[Latency vs $\#$queries]{
    	\includegraphics[width=0.3\columnwidth]{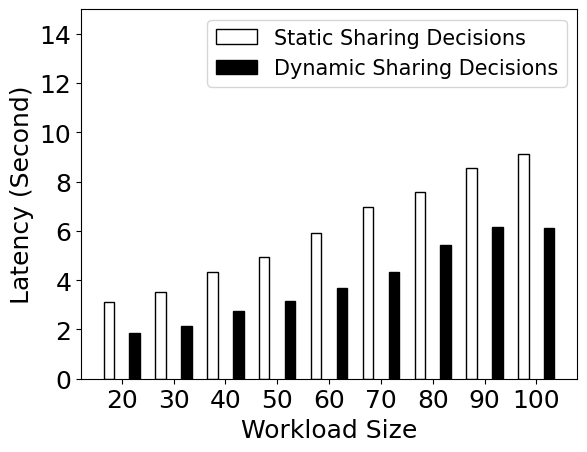}
    	 \label{fig:opt-latency-queries}
	}\\
	\subfigure[Throughput vs $\#$events]{
    	\includegraphics[width=0.3\columnwidth]{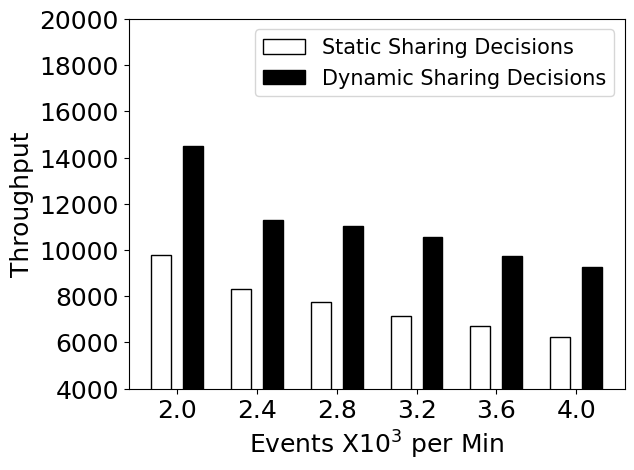}
    	 \label{fig:opt-throughput-events}
	}\hspace{-8pt}
	\subfigure[Throughput vs $\#$queries]{
    	\includegraphics[width=0.3\columnwidth]{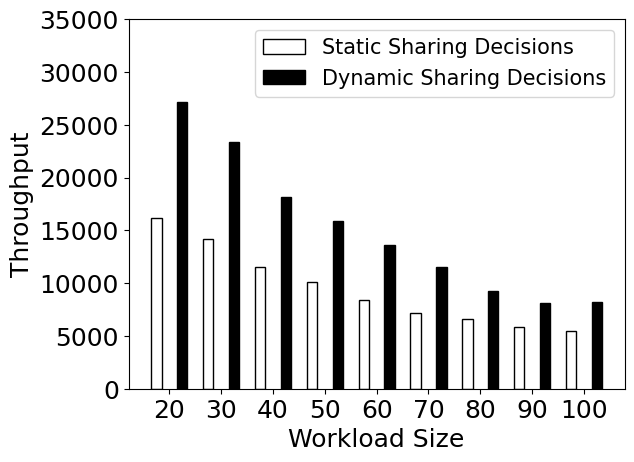}
    	\label{fig:opt-throughput-queries}
	}
    \caption{Dynamic versus static sharing decisions (Stock data set)}
    \label{fig:optimizer}
\end{figure}

In Figures~\ref{fig:opt-latency-events}, \ref{fig:opt-throughput-events} and \ref{fig:opt-memory-events}, as the number of events per minute increases from 2K to 4K, the number of snapshots maintained by the \app\ executor grows from 4K to 8K. As soon as the overhead of snapshot maintenance outweighs the benefits of sharing, the \app\ optimizer decides to stop sharing. The \app\ executor then splits these shared graphlets (Section~\ref{sec:decisions}). The \app\ dynamic optimizer shares approximately 90\% of bursts. The rest 10\% of the bursts are not shared which substantially reduces the number of snapshots by around 50\% compared to the shared execution. 

In contrast, the static optimizer decides to share certain Kleene sub-patterns by a fixed set of queries during the entire window. Since these decisions are made at compile time, they do not incur overhead at runtime. However, these static decisions do not take the stream fluctuations into account. Consequently, these sharing decisions may do more harm than good by introducing significant CPU overhead of snapshot maintenance, causing non-beneficial shared execution. During the entire execution, the static optimizer always decides to share, and the number of snapshots grows dramatically from 10K to 20K. 
Therefore, our \app\ dynamic sharing approach achieves 
21--34\% speed-up and
27--52\% throughput improvement
compared to the executor that obeys to static sharing decisions.


\begin{figure}[!htb]
\vspace{-10pt}
	\centering
    \subfigure[Memory vs $\#$events]{
    	\includegraphics[width=0.3\columnwidth]{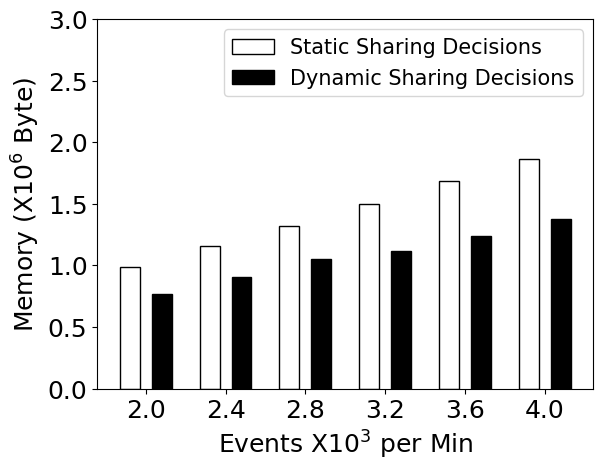}
    	 \label{fig:opt-memory-events}
	}
	\subfigure[Memory vs $\#$queries]{
    	\includegraphics[width=0.3\columnwidth]{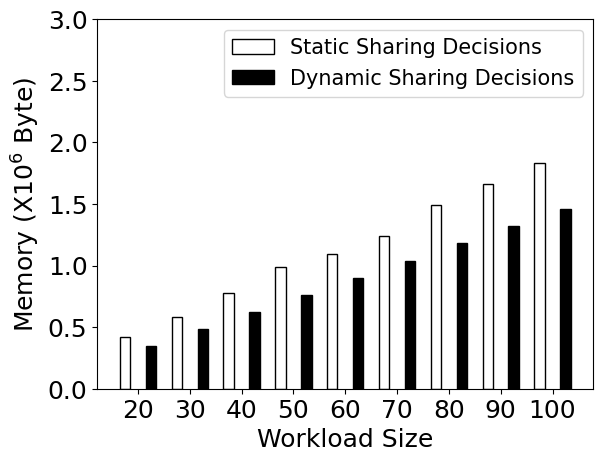}
    	\label{fig:opt-memory-queries}
	}
\vspace{-10pt}
    \caption{Dynamic versus static sharing decisions (Stock data set)}
    \label{fig:opt-memory}
\vspace{-5pt}
\end{figure}

We observe similar gains of \app\ with respect to memory consumption in Figure~\ref{fig:opt-memory-events}. \app\ reduces memory by 25\% compared to the executor based on static sharing decisions because the number of snapshots introduced by \app\ dynamic sharing decisions is much less than the number of snapshots introduced by the static sharing decisions. 


We also vary the number of queries in the workload from 20 to 100, and we observe similar gains by \app\ dynamic sharing optimizer in terms of latency, throughput, and memory (depicted in Figures~\ref{fig:opt-latency-queries}, \ref{fig:opt-throughput-queries}, and \ref{fig:opt-memory-queries}). \app\ can effectively leverage the beneficial sharing opportunities within a large query workload.



Lastly, we measured the runtime overhead of the \app\ dynamic sharing decisions. Even though the number of sharing decisions ranges between 400 and 600 per window, the latency incurred by these decisions stays within 20 milliseconds (less than 0.2\% of total latency per window) because these decisions are light-weight (Section~\ref{sec:decisions}). Also, the latency of one-time static workload analysis (Section~\ref{sec:tempalte}) stays within 81 milliseconds. Thus, we conclude that the overhead of dynamic decision making and static workload analysis are negligible compared to their gains.




\section{Related Work}
\label{sec:related}


\textbf{Complex Event Processing Systems} (CEP) have gained popularity in the recent years~\cite{esper,flink,streaminsight,oracle}. Some approaches use a Finite State Automaton (FSA) as an execution framework for pattern matching~\cite{ADGI08,DGPRSW07,WDR06,ZDI14}. Others employ tree-based models~\cite{MM09}. Some approaches study lazy match detection~\cite{KSS15}, compact event graph encoding~\cite{PLAR17}, and join plan generation~\cite{KS18join}. We refer to the recent survey~\cite{Giatrakos2020} for further details.
While these approaches support trend aggregation, they construct trends prior to their aggregation. Since the number of trends is exponential in the number of events per window~\cite{QCRR14, ZDI14}, such two-step approaches do not guarantee real-time response~\cite{PLRM18,PLRM19}. Worse yet, they do not leverage sharing opportunities in the workload. The re-computation overhead is substantial for workloads with thousands of queries.

\textbf{Online Event Trend Aggregation.} Similarly to single-event aggregation, event trend aggregation has been actively studied. A-Seq~\cite{QCRR14} introduces online aggregation of event sequences, i.e., sequence aggregation without sequence construction. \greta~\cite{PLRM18} extends A-Seq by Kleene closure. Cogra~\cite{PLRM19} further generalizes online trend aggregation by various event matching semantics. However, none of these approaches addresses the challenges of multi-query workloads, which is our focus.

\textbf{CEP Multi-query Optimization} follows the principles commonly used in relational database systems~\cite{Sellis:1988}, while focusing on pattern sharing techniques. RUMOR~\cite{hong2009rule} defines a set of rules for merging queries in NFA-based RDBMS and stream processing systems. E-Cube~\cite{LRGGWAM11} inserts sequence queries into a hierarchy based on concept and pattern refinement relations. SPASS~\cite{RLR16} estimates the benefit of sharing for event sequence construction using intra-query and inter-query event correlations. MOTTO~\cite{ZVDH17} applies merge, decomposition, and operator transformation techniques to re-write pattern matching queries. Kolchinsky et al.~\cite{KS19} combine sharing and pattern reordering optimizations for both NFA-based and tree-based query plans.
However, these approaches do not support online aggregation of event sequences, i.e., they construct all event sequences prior to their aggregation, which degrades query performance. To the best of our knowledge, \sharon~\cite{PRLRM18} and Muse~\cite{RPLR20} are the only solutions that support shared online aggregation. However, \sharon\ does not support Kleene closure. Worse yet, \sharon\ and Muse make static sharing decisions. In contrast, \app\ harnesses additional sharing benefit thanks to dynamic sharing decisions depending on the current stream properties.

\textbf{Multi-query Processing over Data Streams.} Sharing query processing techniques are well-studied for streaming systems. NiagaraCQ~\cite{10.1145/342009.335432} is a large-scale system for processing multiple continuous queries over streams. TelegraphCQ~\cite{DBLP:conf/cidr/ChandrasekaranDFHHKMRRS03} introduces a tuple-based dynamic routing for inter-query sharing~\cite{10.1145/564691.564698}. 
AStream~\cite{KRM19} shares computation and resources among several queries executed in Flink~\cite{flink}.
Several approaches focus on sharing optimizations given different predicates, grouping, or window clauses~\cite{AW04,GSCL12,HFAE03,KWF06,LMTPT05,TKPP20,ZKOSZ10}. However, these approaches evaluate Select-Project-Join queries with windows and aggregate single events. They do not support CEP-specific operators such as event sequence and Kleene closure that treat the order of events as a first-class citizen. Typically, they require the construction of join results prior to their aggregation. In contrast, \app\ not only avoids the expensive event trend construction, but also exploits the sharing opportunities among trend aggregation queries with diverse Kleene patterns.


\section{Conclusions}
\label{sec:conclusions}

\app\ integrates a shared online trend aggregation execution strategy with a dynamic sharing optimizer to maximize the benefit of sharing. 
It monitors fluctuating streams, recomputes the sharing benefit, and switches between shared and non-shared execution at runtime. 
Our experimental evaluation demonstrates substantial performance gains of \app\ compared to state-of-the-art.

\section*{Acknowledgments} 
This work was supported by 
NSF grants IIS-1815866, IIS-1018443, CRI-1305258, 
the U.S. Department of Agriculture grant 1023720,
and the U.S. Department of Education grant P200A150306.

\bibliographystyle{abbrv}
\bibliography{hamlet_tr}  



\end{document}